\title{An Optimal Randomized Algorithm for Finding the Saddlepoint}
\author{Justin Dallant}{Department of Computer Science, Université libre de Bruxelles, Belgium} {Justin.Dallant@ulb.be}{https://orcid.org/0000-0001-5539-9037}{Supported by the French Community of Belgium via the funding of a FRIA grant.}%
\author{Frederik Haagensen%
}{Department of Computer Science, IT University of Copenhagen, Denmark}{haag@itu.dk}{https://orcid.org/0000-0002-4161-4442}{Supported by Independent Research Fund Denmark, grant 0136-00144B, ``DISTRUST'' project.}
\author{Riko Jacob}{Department of Computer Science, IT University of Copenhagen, Denmark}{rikj@itu.dk}{https://orcid.org/0000-0001-9470-1809}{}
\author{László Kozma}{Institut für Informatik, Freie Universität Berlin, Germany}{laszlo.kozma@fu-berlin.de}{https://orcid.org/0000-0002-3253-2373}{Supported by DFG grant
KO 6140/1-2.}
\author{Sebastian Wild}{Department of Computer Science, University of Liverpool, UK}{wild@liverpool.ac.uk}{https://orcid.org/0000-0002-6061-9177}{Supported by EPSRC grant EP/X039447/1.}
\authorrunning{J. Dallant,  F. Haagensen, R. Jacob, L. Kozma and S. Wild} %
\keywords{saddlepoint, matrix, comparison, search} %
\newcommand{\DeclareAutoPairedDelimiter}[3]{%
  \expandafter\DeclarePairedDelimiter\csname Auto\string#1\endcsname{#2}{#3}%
  \begingroup\edef\x{\endgroup
    \noexpand\DeclareRobustCommand{\noexpand#1}{%
      \expandafter\noexpand\csname Auto\string#1\endcsname*}}%
  \x}
\DeclareAutoPairedDelimiter\ceil{\lceil}{\rceil}
\DeclareAutoPairedDelimiter\floor{\lfloor}{\rfloor}
\DeclarePairedDelimiter\pars{(}{)}
\newcommand{\OO}[1]{O\pars*{#1}}
\newcommand{\OOmega}[1]{\Omega\pars*{#1}}
\renewcommand{\epsilon}{\varepsilon}
\newcommand{\E}{\mathbb{E}}
\DeclareMathOperator{\polylog}{polylog}
\DeclareMathOperator{\width}{width}
\DeclareMathOperator{\height}{height}
\newcommand{\lk}[1]{}
\begin{document}

\maketitle

\begin{abstract}
A \emph{saddlepoint} of an $n \times n$ matrix is an entry that is the maximum of its row and the minimum of its column. Saddlepoints give the \emph{value} of a two-player zero-sum game, corresponding to its pure-strategy Nash equilibria; efficiently finding a saddlepoint is thus a natural and fundamental algorithmic task. 

For finding a \emph{strict saddlepoint} (an entry that is the strict maximum of its row and the strict minimum of its column) we recently gave an $\OO{n\log^*{n}}$-time algorithm, improving the $\OO{n\log{n}}$ bounds from 1991 of Bienstock, Chung, Fredman, Schäffer, Shor, Suri and of Byrne and Vaserstein. 

In this paper we present an optimal %
$\OO{n}$-time algorithm for finding a strict saddlepoint based on random sampling. Our algorithm, like earlier approaches, accesses matrix entries only via unit-cost binary comparisons. For finding a (non-strict) saddlepoint, we extend an existing lower bound to randomized algorithms, showing that the trivial $O(n^2)$ runtime cannot be improved even with the use of randomness. 

\end{abstract}

\section{Introduction}\label{sec1}

Given a matrix $A$ with entries from a set of comparable elements (e.g., $\mathbb{R}$ or $\mathbb{N}$), a \emph{saddlepoint} of $A$ is an entry that is the maximum of its row and the minimum of its column. A \emph{strict saddlepoint} is an entry that is the strict maximum of its row and the strict minimum of its column. It is easy to see that a strict saddlepoint, if it exists, must be unique. 

If $A$ is the payoff matrix of a two-player zero-sum game and the payoffs are pairwise distinct, then a saddlepoint of $A$ (if it exists) is necessarily strict and corresponds to the pure-strategy Nash equilibrium, giving the \emph{value} of the game, e.g., see~\cite[\S\,4]{gt}. Finding a strict saddlepoint efficiently is thus a fundamental algorithmic task. Saddlepoint computation also arises in continuous optimization (e.g., for image processing or machine learning), where a (global or local, exact or approximate) saddlepoint of a function $f(x,y)$ is sought, typically under additional structural assumptions on $f$, e.g., see~\cite{chambolle2011first, chambolle2016ergodic, lin2020near, nedic2009subgradient, razaviyayn2020nonconvex}. By contrast, our problem setting is discrete, and we make no assumptions on the input matrix $A$; the iterative methods developed in the above settings are thus, not applicable.  

Finding a saddlepoint (strict or not) of an $n$-by-$n$ matrix $A$ can easily be done in $O(n^2)$ time (Knuth~\cite[\S\,1.3.2]{Knuth1}), and a simple adversary argument (Llewellyn, Tovey, and Trick~\cite{Llewellyn1988}) shows that in the presence of duplicates, every deterministic algorithm that finds a saddlepoint must make $\Omega(n^2)$ comparisons in the worst case. 

Strict saddlepoints turn out to be algorithmically more interesting, and perhaps surprisingly, we can find a strict saddlepoint (or report non-existence) examining only a vanishingly small part of $A$.   
The first subquadratic algorithm for finding a strict saddlepoint was obtained in 1988 by Llewellyn, Tovey, and Trick~\cite{Llewellyn1988} with a runtime of $O(n^{\log_2{3}}) \subset O(n^{1.59})$.
In 1991, Bienstock, Chung, Fredman, Schäffer, Shor, Suri~\cite{Bienstock1991}, and independently, Byrne and Vaserstein~\cite{Byrne1991} found algorithms with runtime $O(n\log{n})$. Due to the implicit sorting step of these algorithms and the lack of improvements in three decades, it was natural to expect this bound to be best possible. 

Very recently, however, an algorithm with runtime $O(n\log^*{n})$ was obtained by the authors of this paper~\cite{dallant2024finding}, where $\log^*(\cdot)$ is the slowly growing \emph{iterated logarithm} function. The algorithm of~\cite{dallant2024finding} as well as all earlier algorithms are deterministic.  Bypassing the sorting barrier has raised the possibility of a linear-time algorithm that would match the natural lower bound: $2n-2$ comparisons are required to verify that a given entry is indeed a saddlepoint.  In this paper we give a randomized algorithm attaining this bound (up to constant factors).

\begin{theorem}\label{thm1}
Given an $n \times n$ matrix $A$, we can identify the strict saddlepoint of $A$, or report its non-existence, in $O(n)$ time with high probability. 
\end{theorem}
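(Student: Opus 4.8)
\quad
The plan is to locate the strict saddlepoint (if it exists) through a sequence of sampling-based pruning phases and then verify the surviving candidate directly; assume for now that a strict saddlepoint $A[i^*,j^*]=v^*$ exists, as non-existence is caught by the final verification step. Write $M_i=\max_j A[i,j]$ and $\mu_j=\min_i A[i,j]$. First I would record an easy structural fact: $M_i>v^*$ for every $i\ne i^*$ while $M_{i^*}=v^*$, and dually $\mu_j<v^*$ for every $j\ne j^*$ while $\mu_{j^*}=v^*$ (for the first: $M_i\ge A[i,j^*]>v^*$ because $v^*$ is the \emph{strict} minimum of column $j^*$, and $M_{i^*}=v^*$ because $v^*$ is the maximum of row $i^*$; dually for columns). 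In particular $i^*$ is the unique minimizer of the row maxima and $j^*$ the unique maximizer of the column minima, with strict gaps. Two corollaries drive the algorithm: \emph{(bounds)} $M_r\ge v^*\ge\mu_c$ for every row $r$ and column $c$, so fully scanning any row yields an upper bound $U\ge v^*$ and scanning any column a lower bound $L\le v^*$; and \emph{(elimination)} a probed entry $A[i,c]$ exceeding a value already known to be $\ge v^*$ certifies $i\ne i^*$, and dually a probed entry below a value known to be $\le v^*$ certifies that its column is not $j^*$.

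The main loop maintains candidate sets $R\subseteq[n]$ of rows and $C\subseteq[n]$ of columns with the invariant $i^*\in R$, $j^*\in C$, together with bounds $L\le v^*\le U$; initially $R=C=[n]$, $L=-\infty$, $U=+\infty$. A phase samples a constant number of uniformly random rows of $R$ and columns of $C$, reads each sampled row over all of $C$ and each sampled column over all of $R$ (a total of $\OO{|R|+|C|}$ probes), tightens $U$ and $L$ from the observed maxima and minima via the bounds corollary, and then deletes from $R$ and $C$ every candidate that the elimination corollary certifies non-optimal using the entries read in this phase. The invariant survives because row $i^*$ has all entries $\le v^*\le U$ and column $j^*$ has all entries $\ge v^*\ge L$, so neither is ever deleted. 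The crucial claim to prove is that, with high probability, a phase deletes a constant fraction of both $R$ and $C$; granting this, $|R|$ and $|C|$ decay geometrically, so there are $\OO{\log n}$ phases and the total probe count is $\OO{n+n/2+n/4+\cdots}=\OO{n}$.

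Once $|R|$ and $|C|$ are $\OO{1}$, I verify directly: for each pair $(i,j)\in R\times C$ check in $2n-2$ comparisons whether $A[i,j]$ is the strict maximum of its row and the strict minimum of its column, output the unique pair that passes, and report non-existence if none does. Verification is unconditional, so a spurious candidate — in particular, any candidate produced when no strict saddlepoint exists and the earlier invariants are vacuous — is simply rejected, which gives correctness. Amplifying each phase's success probability by repeating its sampling a constant number of times and taking a union bound over the $\OO{\log n}$ phases yields total running time $\OO{n}$ with high probability.

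The heart of the argument, and the step I expect to be the main obstacle, is showing that a single phase, using only $\OO{|R|+|C|}$ probes, certifies a constant fraction of the surviving rows and columns. This is delicate: the maximum of a random row tends to be a \emph{loose} upper bound on $v^*$ (it is biased upward), and a ``bad'' column may contain only a single entry below $v^*$, so a handful of random probes of that column need not expose it. Resolving this seems to require using both directions at once — the entries read along a sampled row serve as pivots certifying many columns, and symmetrically — and arguing that the random choice of which row and column to scan in full lands on a ``productive'' one with constant probability. I expect this to rest on a potential- or rank-based argument linking the shrinking of the interval $[L,U]$ around $v^*$ to the number of certifications, with the remaining ingredients (the geometric-sum bound and a Chernoff-type tail bound) being routine.
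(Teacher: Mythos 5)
Your structural facts, the invariant $i^*\in R$, $j^*\in C$, the geometric-decay accounting, and the final verification step are all sound, and you have correctly located the crux: showing that one phase of $\OO{|R|+|C|}$ probes eliminates a constant fraction of rows and columns. But that claim is left unproven, and for the phase as you describe it — fully scan $\OO{1}$ random rows/columns, set $U$ to the minimum observed row-maximum, then delete any row with a probed entry exceeding $U$ — it is actually false. Consider a matrix in which $n/2$ ``bad'' rows have every entry in $(v^*-1,v^*)$ except the entry in column $j^*$, which lies in $(v^*,v^*+1)$; this is consistent with $(i^*,j^*)$ being a strict saddlepoint. A bad row's only entry exceeding $v^*$ sits in column $j^*$, so it can be certified non-optimal only if column $j^*$ happens to be one of the $\OO{1}$ fully scanned columns (probability $\OO{1/|C|}$) or if the row itself is one of the $\OO{1}$ fully scanned rows. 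Hence each phase removes only $\OO{1}$ bad rows, $|R|$ stays at $\Omega(n)$ for $\Omega(n)$ phases, and the total cost is $\Omega(n^2)$. Amplifying by constant repetition does not help, because the obstruction is structural, not a low-probability event.

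The fix requires a different elimination mechanism, which is where the paper departs from your plan. Rather than certifying rows one at a time against a global threshold, the paper finds a single \emph{pivot} entry $p$ in some row $i$ such that (a) every row of the matrix contains an entry $\geq p$ and (b) at least a quarter of the entries of row $i$ are below $p$; property (a) then licenses deleting, in one shot, the quarter of the \emph{columns} $j'$ with $A_{ij'}<p$ (a column containing the saddlepoint cannot have $A_{ij'}<p$, by a short exchange argument). Property (a) is established collectively, not per row: one random entry is sampled from each row, a $3/4$-quantile of these samples serves as a threshold $t$, rows whose sample exceeds $t$ are discarded from consideration (they already witness an entry $\geq$ any eventual $p\leq t$), and this is iterated until $m^{19/20}$ rows remain, after which denser per-row sampling picks a value between each surviving row's lower quartile and median; $p$ is the minimum of these. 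Note that in my bad-row example this works fine: a bad row's sampled entry is almost surely below $v^*\leq t$, so the row is never discarded, and it still contains an entry $\geq p$ because $p$ is taken as a minimum over the survivors. You would need to replace your per-phase elimination claim with an argument of this collective, quantile-based kind (or something equivalent); the ``potential- or rank-based argument'' you gesture at is exactly the missing content, and it is the hard part of the proof rather than a routine ingredient.
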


Our algorithm is Las Vegas, i.e., it always gives the correct answer, with the runtime guarantee holding with high probability. The existence of a deterministic $O(n)$-time algorithm remains open. %
In \S\S\,\ref{sec2}, \ref{sec:finding_pivot}, \ref{sec4} we describe the algorithm and its analysis, proving Theorem~\ref{thm1} and further extensions. 

In \S\,\ref{sec5} we prove a lower bound on the efficiency of \emph{randomized} algorithms for the general saddlepoint problem, showing that the trivial quadratic runtime cannot be improved even with randomization. 

\begin{theorem}
Every randomized comparison-based algorithm that finds a (non-strict) saddlepoint with probability at least $\sfrac{5}{6}$ must take $\Omega(n^2)$ expected time on some $n \times n$ matrix.
\end{theorem}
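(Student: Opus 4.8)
The plan is to use Yao's minimax principle. I would exhibit a distribution $\mathcal{D}$ over $n\times n$ matrices on which \emph{every} deterministic comparison-based algorithm that errs with probability at most $\tfrac13$ must use $\Omega(n^2)$ comparisons, and then derive the randomized bound by a standard truncation: a randomized algorithm that succeeds with probability $\ge\tfrac56$ and runs in expected time $B$ on every input can be stopped after $6B$ steps (so it uses $O(B)$ comparisons always), and by Markov's inequality plus averaging over its coins this yields a deterministic algorithm using $O(B)$ comparisons that errs with probability $\le\tfrac13$ on $\mathcal{D}$; the distributional lower bound then forces $B=\Omega(n^2)$, hence $\Omega(n^2)$ expected time on some input.

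For the hard distribution, fix values $L<H$ and, for a permutation $\rho\in S_n$, let $M_\rho$ be the matrix with $(M_\rho)_{ij}=H$ if $\rho(j)=i$ and $(M_\rho)_{ij}=L$ otherwise, so every row and every column contains exactly one $H$; then every row minimum is $L$ and every column maximum is $H$, so $\max_i\min_j (M_\rho)_{ij}=L<H=\min_j\max_i (M_\rho)_{ij}$ and $M_\rho$ has no saddlepoint. The distribution $\mathcal{D}$ draws a uniformly random $\rho$ and then, with probability $\tfrac12$, outputs $M_\rho$ (a ``no'' instance); with probability $\tfrac12$ it picks a uniformly random column $j^\star$ and a uniformly random row $i^\star\neq\rho(j^\star)$ and outputs the matrix obtained from $M_\rho$ by moving the single $H$ of column $j^\star$ from row $\rho(j^\star)$ to row $i^\star$ (a ``yes'' instance). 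A ``yes'' instance differs from $M_\rho$ only in the two cells $(\rho(j^\star),j^\star)$ and $(i^\star,j^\star)$ of column $j^\star$, yet it has a saddlepoint: row $\rho(j^\star)$ is now all-$L$, and since every column still has a unique $H$, every entry of that row is a saddlepoint. So a correct algorithm must answer ``no saddlepoint'' on $M_\rho$ and must output a cell of row $\rho(j^\star)$ on the corresponding ``yes'' instance.

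To prove the distributional bound I would couple the two cases: run the deterministic algorithm on $M_\rho$; its execution is identical on the coupled ``yes'' instance unless at some point it uses one of the two altered cells of column $j^\star$ as an operand of a comparison. Call $(\rho(j),j)$ the \emph{needle} of column $j$; then the first altered cell is exactly the needle of column $j^\star$, and the second is $(i^\star,j^\star)$ with $i^\star$ uniform over the $n-1$ non-needle rows of that column. Since $T$ comparisons touch at most $2T$ cells, the probability (over $j^\star,i^\star$) that the run touches $(i^\star,j^\star)$ is at most $\tfrac{2T}{n(n-1)}=o(1)$ when $T=o(n^2)$; and the probability that it touches the needle $(\rho(j^\star),j^\star)$ equals the expected fraction of columns whose needle the algorithm has located. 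So, up to an $o(1)$ term, the whole argument reduces to one combinatorial statement: an algorithm making $o(n^2)$ comparisons locates only $o(n)$ needles in expectation. Granting this, with probability $1-o(1)$ the algorithm runs identically on the two coupled inputs and therefore errs on at least one of them, giving error $\tfrac12-o(1)$ on $\mathcal{D}$ — more than $\tfrac13$ for large $n$, the contradiction we need.

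The needle-counting statement is the crux, and I expect it to be the main obstacle. The intended proof is an accounting argument: for a uniformly random $\rho$ and any deterministic algorithm, after conditioning on the history of revealed facts ``$\rho(j)=i$'' and ``$\rho(j)\neq i$'', the conditional probability that an undetermined cell $(i,j)$ is the needle of its column is at most $1/\bigl(n-(\#\text{needles already located})-(\#\text{rows already ruled out in column }j)\bigr)$. Hence, as long as not too many needles have been found and column $j$ has not been queried too often, each comparison has probability $O(1/n)$ of revealing a new needle; locating the $k$-th needle then costs $\Omega(n-k)$ comparisons in expectation, and summing over $k$ up to $\Theta(n)$ yields $\Omega(n^2)$. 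The delicate points are (i) justifying the conditional-probability bound, which amounts to a counting inequality on the permutations consistent with a partial pattern, and (ii) handling ``overworked'' columns into which the algorithm pours many comparisons — finishing such a column is cheap, but there are only $O(T/n)=o(n)$ of them, so they contribute only $o(n)$ needles — which needs an optional-stopping argument so that the per-step probability bound remains usable even after many comparisons. Combining this lemma with the coupling and Yao's principle then gives the theorem.
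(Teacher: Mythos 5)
Your high-level architecture (Yao's principle via truncation at $6B$ steps, a planted ``hidden cell'' distribution, and a coupling between a no-instance and a yes-instance differing in $O(1)$ cells) matches the paper's strategy in spirit, and your reduction from the randomized bound to the distributional one is fine. The genuine gap is exactly where you locate it: the needle-counting lemma is not proved, and in your formulation it is substantially harder than it needs to be, for two reasons. First, in the comparison model the transcript is not a conjunction of facts of the form ``$\rho(j)=i$'' or ``$\rho(j)\neq i$'': a comparison of cells $(i,j)$ and $(i',j')$ in different columns returning ``equal'' reveals the disjunction ``both are $L$, or both are $H$'', so your conditional-probability bound on ``$(i,j)$ is the needle of column $j$'' does not follow from a clean counting inequality over permutations consistent with a partial pattern. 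Second, the permutation structure correlates columns (locating one needle shrinks the candidate set of every other column), so both the per-step $O(1/n)$ bound and the treatment of ``overworked'' columns need care; the optional-stopping step is asserted, not carried out. As written, the crux of the proof is missing.

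The paper sidesteps all of this with two simplifications you could adopt. It proves the bound in the stronger \emph{value-query} model (each query reveals one entry; a comparison-based algorithm making $T$ comparisons touches at most $2T$ cells), so the conditioning is trivial: the unrevealed special element of a row is uniform over that row's unqueried cells. And it plants one special element per row \emph{independently} (no permutation constraint), designating a single uniformly random one, $t$, whose sign decides whether a saddlepoint exists; the algorithm must essentially find $t$ itself. A heavy/light row split (rows receiving $\geq n/10$ queries can contribute at most $n/100$ revealed special elements within a budget of $n^2/1000$; light rows contribute at most $n/900$ in expectation, hence at most $n/100$ except with probability $1/9$ by Markov) shows that at most $n/50$ special elements are revealed, so $t$ is found with probability at most $1/50$, and the error bound follows. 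Reworking your argument along these lines --- or fully proving your needle-counting lemma with the disjunctive transcripts and cross-column correlations handled --- is needed to close the proof.
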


\section{The overall algorithm}\label{sec2}

Let $A$ be an $n \times n$ input matrix with pairwise distinct, comparable entries, where $A_{ij}$ is the entry in the $i$-th row and $j$-th column. Note that the assumption of distinctness is only for convenience of presentation, we comment later on how to remove this assumption and also extend our results to non-square matrices. 

Our approach is based on the following reduction step: if every row of $A$ contains an entry at least as large as $A_{ij}$ (and thus, $A_{ij}$ is a lower bound on the value of the saddle point), then we can delete each column $j'$ of $A$ with an entry $A_{ij'} < A_{ij}$ (because such a column could not contain the saddlepoint), and the strict saddlepoint of the matrix, if it exists, is preserved.
Indeed, if the deleted column $j'$ were to contain a strict saddlepoint $A_{kj'}$, then $A_{ij} > A_{ij'} \geq A_{kj'} \geq A_{kx} \geq A_{ij}$ would yield a contradiction; here $A_{kx}$ is the entry in row $k$ that is at least as large as $A_{ij}$, and the second and third inequalities hold due to $A_{kj'}$ being a saddlepoint. We call such an entry $A_{ij}$ a \emph{horizontal pivot}, if at least a quarter of the entries in row $i$ are smaller than $A_{ij}$ (allowing to remove the corresponding columns), see Figure~\ref{fig:pivot}. 

\begin{figure}[tbh]
\centering
\includegraphics[scale=0.37]{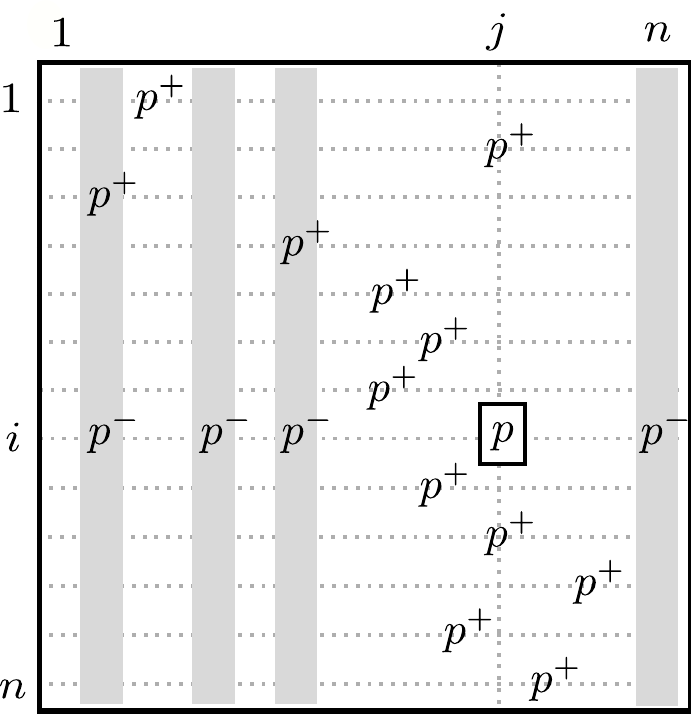}
\caption{\label{fig:pivot}Horizontal pivot $p = A_{ij}$ (framed). Entries denoted $p^{-}$ are smaller than $p$, entries denoted $p^{+}$ are larger than $p$. Columns marked gray cannot contain a strict saddlepoint.}
\end{figure}

Similarly, we call $A_{ij}$ a \emph{vertical pivot}, if every column has at least one entry at most as large as $A_{ij}$, and at least a quarter of the entries in column $j$ are larger than $A_{ij}$. By a symmetric argument, we can safely delete the rows of these entries. Note that these deletions may create a new, spurious strict saddlepoint; we can however, easily detect this later. 

Alternating in finding a horizontal and a vertical pivot and deleting a quarter of the columns and a quarter of the rows forms the core of our algorithm (Figure~\ref{fig1}). 
Notice that such pivots always exist, e.g., the minimum of all row-maxima is a horizontal pivot (it would allow the removal of all but one column), and the maximum of all column-minima is a vertical pivot; efficiently finding such entries, however, is far from obvious.  

\begin{figure}[h]
	\begin{codebox}
		\Procname{\proc{ReduceMatrix}$(A,s)$:}
		\li	\While $\height(A)>s$
		\Do
			\li \kw{try}
			\Do
				\li	$(i,j) \gets \proc{FindHorizontalPivot}(A)$
				\li Delete $\floor{\width(A)/4}$ columns $j'$ of $A$ with $A_{ij'} < A_{ij}$
				\li $(i,j) \gets \textsc{FindVerticalPivot}(A)$
				\li Delete $\floor{\height(A)/4}$ rows $i'$ of $A$ with $A_{i'j} > A_{ij}$
			\End
			\li \kw{catch} \textsc{Failed} pivot selection 
				\Then \li \Return \textsc{Failed}
			\End
	\End
		\li \Return $A$
	\end{codebox}
 \caption{Reducing the input matrix to size $s \times s$.\label{fig1}}
\end{figure}

Note that we could delete \emph{all} columns $j'$ with $A_{ij'} < A_{ij}$ (and all rows $i'$ with $A_{i'j} > A_{ij}$), but we restrict ourselves to deleting exactly a quarter of rows/columns to simplify the analysis. 

Assume for now that $\textsc{FindHorizontalPivot}$ (resp.\ $\textsc{FindVerticalPivot}$), when called on an $m\times m$ (or $m \times \lceil 3m/4 \rceil$) matrix, runs in $O(m)$ time, and returns a horizontal (resp.\ vertical) pivot with probability at least $1-f(m)$, for some decreasing function $f:\mathbb{N} \rightarrow [0,1]$ which will be made explicit later. 
Failure to find a suitable pivot will be reported as \textsc{Failed} pivot selection. 

\begin{theorem}\label{thm3}
    Let $A$ be an $n\times n$ matrix. %
    Then $\textsc{ReduceMatrix}(A,s)$ runs in $O(n)$ time and with probability at least $1-O(f(s)\log{\frac{n}{s}})$ returns an $s'\times s'$ submatrix $A'$ of $A$, with $s' \leq s$. If $A$ has a strict saddlepoint, then $A'$ has the same strict saddlepoint. %
\end{theorem}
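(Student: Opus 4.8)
The plan is to establish four things in turn: (a)~\textsc{ReduceMatrix} always runs in $O(n)$ time; (b)~whenever it does not report \textsc{Failed}, its output $A'$ is a square submatrix of $A$ of side length at most $s$; (c)~\textsc{Failed} is reported with probability only $O(f(s)\log\frac{n}{s})$; and (d)~if $A$ has a strict saddlepoint, then $A'$ has the same one. The observation driving all of this is a \emph{squareness invariant}: the working matrix is square at the start of every iteration of the loop. Indeed, from an $m\times m$ matrix, deleting $\floor{m/4}$ columns leaves an $m\times\ceil{3m/4}$ matrix --- precisely the shape \textsc{FindVerticalPivot} is assumed to accept --- and deleting $\floor{m/4}$ rows restores a $\ceil{3m/4}\times\ceil{3m/4}$ square. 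Hence the side lengths obey $m_0=n$ and $m_{k+1}=\ceil{3m_k/4}$, which shrinks geometrically as long as $m_k$ stays above a small constant; taking $s$ at least that constant (the only meaningful regime), the loop runs $T=O(\log\frac{n}{s})$ times and $\sum_{k<T}m_k=O(n)$.

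First I would bound the running time. In the $k$-th iteration every dimension of the working matrix is $\Theta(m_k)$: it is $m_k\times m_k$ when \textsc{FindHorizontalPivot} is called and $m_k\times\ceil{3m_k/4}$ when \textsc{FindVerticalPivot} is called, and never larger. The two pivot searches cost $O(m_k)$ by hypothesis, and each deletion step also costs $O(m_k)$: it suffices to scan the pivot's row (resp.\ column), collect $\floor{m_k/4}$ entries strictly below (resp.\ above) the pivot value --- at least that many exist by the very definition of a horizontal (resp.\ vertical) pivot --- and update the lists of active columns and rows. Summing over the $T$ iterations gives $O(\sum_k m_k)=O(n)$, and since an aborted iteration does no more work than a completed one, the bound is unconditional. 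Claim~(b) is then immediate: a deletion step only ever removes existing rows or columns of $A$, and never all of them, so the working matrix stays a submatrix of $A$; and by the squareness invariant the matrix returned when the loop terminates (because $\height\le s$) is $s'\times s'$ with $s'=m_T\le s$.

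For the probability bound I would use a union bound over the pivot searches. Call an iteration \emph{successful} if both of its pivot searches return a pivot rather than reporting \textsc{Failed}; then \textsc{ReduceMatrix} returns a matrix iff all (at most $T$) iterations are successful. The $k$-th iteration operates on a matrix with $m_k>s$ rows, so --- as $f$ is decreasing --- each of its (at most two) pivot searches reports \textsc{Failed} with probability at most $f(m_k)\le f(s)$, whatever the outcomes of the earlier searches (the subroutines' guarantees being worst-case over their inputs). Summing, \textsc{Failed} is reported with probability at most $2Tf(s)=O(f(s)\log\frac{n}{s})$, which proves~(c).

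Finally, (d) follows by induction over the iterations, using exactly the reduction argument given just before the statement. Assume $A$ has a strict saddlepoint $p=A_{rc}$ --- necessarily unique --- and suppose that at the start of some iteration $p$ lies in the working matrix and is its unique strict saddlepoint. If the iteration is successful, consider its column-deletion step: the valid horizontal pivot $A_{ij}$ has some entry $A_{rx}\ge A_{ij}$ in row $r$, so if column $c$ were deleted we would obtain the chain $A_{rc}\ge A_{rx}\ge A_{ij}>A_{ic}\ge A_{rc}$ (using that $p$ is the maximum of its row and the minimum of its column), a contradiction; hence $p$ survives, and it remains the strict maximum of its row (which only lost entries) and the strict minimum of its column (which is untouched). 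The row-deletion step is symmetric. So $p$ is again a strict saddlepoint at the end of the iteration and, by uniqueness, the only one; carrying the induction through all iterations shows the strict saddlepoint of $A'$ is $p$. The main thing to be careful about is the combinatorial bookkeeping --- the squareness invariant, the geometric decay of the $m_k$ through the floors and ceilings, and the degenerate behaviour for tiny $s$ --- together with pinning down the promised interface of the pivot subroutines, in particular that a call not reporting \textsc{Failed} really returns a valid pivot; the genuine algorithmic and probabilistic content is deferred to \textsc{FindHorizontalPivot} and \textsc{FindVerticalPivot} in the following sections.
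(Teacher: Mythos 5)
Your proof is correct and follows essentially the same route as the paper's: a geometric series over the shrinking side lengths for the $O(n)$ runtime, a union bound over the $O(\log\frac{n}{s})$ iterations (two pivot calls each, failing with probability at most $f(s)$) for the success probability, and the reduction argument preceding the theorem for saddlepoint preservation. Your write-up is merely more explicit about the squareness invariant and the contradiction chain, which the paper treats as "immediate from the preceding discussion."
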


\begin{proof}
The fact that the reduction steps preserve a strict saddlepoint is immediate from our preceding discussion. 
For an $m \times m$ matrix with $s \leq m \leq n$, the two pivot-finding calls take $O(m)$ time. Similarly, finding the columns and rows to be removed takes $O(m)$ time; this requires inspecting the row, resp.\ column of the pivots. 

The step succeeds with probability at least $1-2f(m) \geq 1-2f(s)$ (by the union bound), reducing the matrix to size $\lceil 3m/4 \rceil \times \lceil 3m/4 \rceil$. To delete rows and columns efficiently, only simple bookkeeping is needed: we maintain an array of the remaining row- and column-indices of the original matrix, compacting the array after each iteration, at amortized constant time per deletion.

Thus, starting with an $n \times n$ matrix, we obtain a matrix of size at most $s \times s$ in $O(\log{\frac{n}{s}})$ iterations, with failure probability (again by the union bound) of at most $O(f(s)\log{\frac{n}{s}})$. The total running time is $O(n +  n(3/4) + n(3/4)^2 + \cdots) = O(n)$. 
\end{proof}

In \S\,\ref{sec:finding_pivot} we show that the pivot-finding can be achieved with failure-probability $f(m) = e^{-\Omega(m^{1/20})}$. This yields the following.
\begin{theorem}
    Let $A$ be an $n\times n$ matrix with distinct values. Then we can find the strict saddlepoint of $A$ (or report non-existence) by a Las Vegas randomized algorithm that terminates after $O(n)$ time with probability at least $1-e^{-\Omega(n^{1/21})}$.
\end{theorem}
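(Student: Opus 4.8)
\emph{Plan.} The plan is to run \textsc{ReduceMatrix} with a carefully chosen target size $s$, relying on the pivot-finding guarantee $f(m) = e^{-\Omega(m^{1/20})}$ from \S\ref{sec:finding_pivot}, and then to finish off the small leftover matrix with an off-the-shelf deterministic algorithm. I would take $s := \lceil n^{20/21}\rceil$, so that $f(s) = e^{-\Omega(s^{1/20})} = e^{-\Omega(n^{1/21})}$ while $\log(n/s) = O(\log n)$. By Theorem~\ref{thm3}, the call $\textsc{ReduceMatrix}(A,s)$ then runs in $O(n)$ time and, with probability at least $1 - O(f(s)\log(n/s)) = 1 - e^{-\Omega(n^{1/21})}$ (the polylogarithmic factor being absorbed into the exponent), returns an $s'\times s'$ submatrix $A'$ of $A$ with $s' \le s$ that has the same strict saddlepoint as $A$ whenever $A$ has one.

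\emph{The algorithm.} First I would call $\textsc{ReduceMatrix}(A,s)$. If it reports \textsc{Failed}, fall back to a deterministic strict-saddlepoint algorithm run on $A$ itself --- e.g.\ the $O(n\log^* n)$-time algorithm of~\cite{dallant2024finding}, or a classical $O(n\log n)$-time one --- and return its answer; this rare branch only needs to keep the algorithm correct and terminating. Otherwise I would run such a deterministic algorithm on $A'$, costing $O(s'\log^* s') = O(n^{20/21}\log^* n) = O(n)$ time, to obtain either the (unique) strict saddlepoint $p$ of $A'$ or the verdict that $A'$ has none; in the latter case report non-existence. In the former case I would then verify, in $O(n)$ extra time, whether $p$ is the strict maximum of its row and the strict minimum of its column \emph{in the original matrix $A$}, returning $p$ if so and reporting non-existence otherwise.

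\emph{Correctness and running time.} The resulting algorithm is Las Vegas. For correctness: whenever \textsc{ReduceMatrix} does not report \textsc{Failed} every pivot it used is genuine, so (as argued in \S\ref{sec2}) each deletion preserves the strict saddlepoint; hence if $A$ has a strict saddlepoint $q$ then $q$ is also the strict saddlepoint of $A'$, so the deterministic subroutine returns $q$ and the verification against $A$ confirms it, while if $A$ has no strict saddlepoint the candidate $p$ (an entry of $A$) must fail the verification and we correctly report non-existence --- and we likewise report non-existence, correctly, if $A'$ has no strict saddlepoint, the point being that row deletions can create a spurious strict saddlepoint in $A'$ which the verification against $A$ then discards. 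The \textsc{Failed} branch uses a correct deterministic algorithm, so the output is always correct and the procedure always terminates. For the running time: on the event that \textsc{ReduceMatrix} returns $A'$, which has probability at least $1 - e^{-\Omega(n^{1/21})}$, the three phases cost $O(n)+O(n)+O(n) = O(n)$; on the complementary event the cost is $O(n\log^* n)$, which does not affect the high-probability bound. This would establish the theorem.

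\emph{Main obstacle.} Essentially all the difficulty sits in the pivot-finding routine of \S\ref{sec:finding_pivot}, which is used here as a black box. Beyond that, the two points needing care are the choice of $s$ --- it must be large enough that the reduction's failure probability $O(f(s)\log(n/s))$ already meets the target $e^{-\Omega(n^{1/21})}$, yet small enough ($s' \le n^{20/21}$) for a deterministic algorithm to clear the leftover matrix in $O(n)$ time --- and the fact that \textsc{ReduceMatrix} can manufacture spurious strict saddlepoints, so that a final linear-time verification against the original matrix $A$ is indispensable.
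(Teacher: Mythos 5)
Your proposal is correct and follows essentially the same route as the paper: run \textsc{ReduceMatrix} down to a small submatrix, finish with a known deterministic strict-saddlepoint algorithm, and verify the candidate against $A$ to eliminate spurious saddlepoints. The only (immaterial) differences are your choice of $s=\lceil n^{20/21}\rceil$ versus the paper's $s=n/\log_2 n$, and that you handle a \textsc{Failed} reduction by falling back to a deterministic algorithm rather than restarting the randomized procedure --- both choices yield the same time bound and success probability.
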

\begin{proof}
We run the reduction process of Theorem~\ref{thm3}, setting $s=n/\log_2 n$. The overall probability of success is at least $1 - O(f(n/\log_2{n}) \log\log{n}) \geq 1 -e^{-\Omega(n^{1/21})}$. To obtain a Las Vegas algorithm, we repeat the procedure until it succeeds. With the given probability, no repetition is necessary and the running time is $O(n)$.

Then, we run the deterministic $O(N\log{N})$-time strict saddlepoint algorithm of Bienstock et al.~\cite{Bienstock1991} on the resulting $N \times N$ matrix with $N \leq {n/\log_2{n}}$, in $O(n)$ total time. (Alternatively, the $O(N\log^*{N})$-time algorithm of~\cite{dallant2024finding} can also be used.) Recall that a strict saddlepoint of $A$ is preserved by the reduction. Thus, if the algorithm reports none, then $A$ has none. If the algorithm finds a strict saddlepoint of the reduced matrix, then it is either a strict saddlepoint of $A$ or a spurious one created by the reduction. This can be verified in $O(n)$ time, examining the row and column of the reported entry in $A$. 
\end{proof}

In \S\,\ref{sec4} we show that we can have $f(m) = \OO{n^{-1}}$, using only a total of $\OO{\log n}$ random bits. Again, setting $s=n/\log_2 n$, a similar argument yields the following.
\begin{theorem}\label{thm5}
    Let $A$ be an $n\times n$ matrix with distinct values. Then we can find the strict saddlepoint of $A$ (or report non-existence) by a Las Vegas randomized algorithm that uses only $O(\log n)$ random bits and terminates after $O(n)$ time with probability at least $1-\OO{\log{n}\log\log{n}/n}$.
\end{theorem}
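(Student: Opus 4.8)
The plan is to re-run the pipeline behind the previous theorem, but with the low-randomness pivot procedure of \S\,\ref{sec4}, taken as a black box, in place of the one of \S\,\ref{sec:finding_pivot}. Concretely: call $\textsc{ReduceMatrix}(A,s)$ with $s = n/\log_2 n$, where each call to \textsc{FindHorizontalPivot} or \textsc{FindVerticalPivot} on a matrix of side length $m$ (one of the at most square matrices that arise) runs in $O(m)$ time, returns a valid pivot except with probability at most $f(m)$ as bounded in \S\,\ref{sec4}, and where all these calls together consume only $\OO{\log n}$ random bits in total. If \textsc{ReduceMatrix} returns \textsc{Failed}, fall back to running the deterministic $O(n\log n)$-time algorithm of Bienstock et al.~\cite{Bienstock1991} directly on $A$, using no further randomness; otherwise, by Theorem~\ref{thm3}, we obtain an $N \times N$ submatrix $A'$ with $N \le n/\log_2 n$ that has the same strict saddlepoint as $A$, if any.

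The accounting is then as follows. By Theorem~\ref{thm3}, $\textsc{ReduceMatrix}(A,s)$ runs in $O(n)$ time and fails with probability $\OO{f(s)\log\frac{n}{s}}$; with $s = n/\log_2 n$ we have $\log\frac{n}{s} = \Theta(\log\log n)$, and substituting the bound on $f$ from \S\,\ref{sec4} this becomes $\OO{\log n\log\log n / n}$. Hence, with at least the claimed probability the reduction succeeds and the fallback is never triggered, so the running time is $O(n)$; on the complementary $o(1)$-probability event the deterministic fallback still returns the correct answer (in $O(n\log n)$ time), so the algorithm is Las Vegas. All of the randomness is used inside the single run of \textsc{ReduceMatrix}, so the total is $\OO{\log n}$ random bits.

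It remains to describe the finishing step after a successful reduction. Run a deterministic strict-saddlepoint algorithm on $A'$ — that of~\cite{Bienstock1991} in $O(N\log N)$ time, or the $O(N\log^* N)$-time algorithm of~\cite{dallant2024finding} — which costs $O(n)$ since $N \le n/\log_2 n$. If it reports no strict saddlepoint, then $A$ has none, because the reduction preserves strict saddlepoints. If it returns an entry, then in $A$ that entry is either the strict saddlepoint of $A$ or one rendered spurious by the row and column deletions; a single $O(n)$-time scan of its row and column in $A$ settles which, and we answer accordingly. This phase is deterministic, so the overall bit count is unchanged and the total running time is $O(n)$ on the good event.

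Everything above is routine bookkeeping on top of Theorem~\ref{thm3}; the substantive work — pushing $f(m)$ down to the value claimed in \S\,\ref{sec4} while spending only $\OO{\log n}$ random bits in total — lives in \S\,\ref{sec4}, and I expect that to be the main obstacle. The one point that still needs care in this wrap-up is that a single pool of $\OO{\log n}$ bits feeds all $\Theta(\log\log n)$ pivot calls, whose input matrices are themselves determined by earlier, random deletions. The union bound used in the proof of Theorem~\ref{thm3} continues to apply, since it requires only that each individual call errs with probability at most $f(m)$ and makes no independence assumption; but establishing that marginal per-call bound under this shared, adaptively reused randomness is precisely the delicate step that \S\,\ref{sec4} must carry out.
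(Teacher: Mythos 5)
Your proposal is correct and follows essentially the same route as the paper, which itself only sketches this result (``Following a similar reasoning \dots we omit the detailed calculations''): run $\textsc{ReduceMatrix}(A, n/\log_2 n)$ with the $d$-wise independent samples of \S\,\ref{sec4}, bound the failure probability by $\OO{f(s)\log\frac{n}{s}} = \OO{\log n \log\log n / n}$ via Theorem~\ref{thm3}, and finish deterministically on the reduced $N\times N$ matrix with a final $O(n)$-time verification of the reported entry. The one place you genuinely diverge is in how the Las Vegas guarantee is secured under the $\OO{\log n}$ random-bit budget: the paper restarts the reduction on failure and explicitly concedes that restarts require \emph{fresh} randomness, which sits awkwardly with the stated bit bound, whereas your deterministic $O(n\log n)$-time fallback on the \textsc{Failed} event keeps the bit count at $\OO{\log n}$ unconditionally while preserving correctness and costing only the allowed $o(1)$ probability of exceeding $O(n)$ time. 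That is a cleaner resolution. Your closing observation is also the right one to flag: the union bound in Theorem~\ref{thm3} needs only marginal per-call failure bounds and no independence across calls, so the entire burden rests on \S\,\ref{sec4} establishing $f(m)$ via the Bellare--Rompel tail bound for the shared $d$-wise independent pool --- exactly the part the paper defers.
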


Note that an algorithm which runs in $O(n)$ time with probability at least $1-g(n)$ can be turned into one that runs in $O(n)$ time with probability at least $1-g(n)^c$ for any integer~$c$ (by restarting at most $c$ times if the algorithm does not terminate within a given time budget). %

We also remark that our algorithm is easily parallelizable, and can be %
adapted in, say, the CREW PRAM model, to run with high probability in $O(\polylog{n})$ time and $O(n)$ total work; %
we give more details in \S\,\ref{sec4}. 
By contrast, the earlier deterministic $O(n \log{n})$-time 
algorithms~\cite{Bienstock1991, Byrne1991, dallant2024finding} are inherently sequential: they rely on \emph{adaptively} querying $n$ entries of the matrix, where the choice of each query depends on comparisons involving previously queried items. %

\section{Finding a pivot}\label{sec:finding_pivot}

In this section we describe and analyse the procedure for finding a pivot. We discuss only horizontal pivots as the case of vertical pivots is entirely symmetric. See Figure~\ref{fig2} for the description of the procedure. 
We rely on linear-time selection: \textsc{Select}$(X,i)$ returns the $i$-th smallest entry in $X$ in time $O(|X|)$, and on sampling \emph{with replacement}: 
each call to \textsc{Rand}$(k)$ returns an element drawn independently, uniformly at random from $\{1,\ldots,k\}$ in $O(1)$ time. In \S\,\ref{sec4} we clarify this assumption: all the necessary samples can be generated upfront in time $O(n)$ with a success probability of at least $1-e^{-\Omega(n)}$.  

\begin{figure}[!h]
	\begin{codebox}
	\Procname{\proc{FindHorizontalPivot}$(A)$:}
		\li	Let $R$ be the set of rows of $A${, each of length $k$}
		\li $m \gets |R|$
		\\[-.5\baselineskip]
		
		\zi \Comment{\textsc{(Phase 1)}}
		\li $t \gets \infty$
		\li \While $|R|>\floor{m^{19/20}}$
		\Do
            \li	\For {$i$ in $1,\ldots, |R|$}
    		\Do
                \li {$q_i \gets$ the \textsc{Rand}$(k)$-th element in the $i$-th row of $R$}
                
    		\End
			\li	$\mathcal{R} \gets \left\{q_i \mid 1\leq i \leq |R|\right\}$
			\li	$q \gets \textsc{Select}\bigl(\mathcal{R},\ceil{\frac34|\mathcal{R}|} \bigr)$
			\li	$t \gets \min\{t,q\}$
			\li	Delete from $R$ all rows $i$ where $q_i > t$
		\End
		\\[-.5\baselineskip]
		
		\zi \Comment{\textsc{(Phase 2)}}
		\li	\Foreach remaining row $r$ in $R$
		\Do
              \li	\For {$i$ in $1,\ldots, \floor{m^{1/20}}$}
    		\Do
                \li	{$x_i :=$ the \textsc{Rand}$(k)$-th element of row $r$}

    		\End

            \li {$\mathcal{R}_r \gets \left\{x_i \mid 1 \leq i \leq \floor{m^{1/20}} \right\}$}
			\li	$q'_r \gets \textsc{Select}\bigl(\mathcal{R}_r, \floor{\frac{2}{5}|\mathcal{R}_r|}\bigr)$
		\End
		\\[-.5\baselineskip]
		
		\li	$p \gets \min_r\{q'_r\}$
		\li	\If $p>t$ or $p$ is not larger than $\lfloor k/4 \rfloor$ entries in its row in $A$
		\Then
			\li \Return \textsc{Failed}
		\End
		\li \Else\Do
			\li	\Return $p$
		\End
	\end{codebox}
\caption{Finding a horizontal pivot of the input matrix.\label{fig2}}
\end{figure}

The intuition of our procedure is as follows.  
To find a likely candidate for a horizontal pivot $p$, we want to find a value $q_r$ in a row $r$ where (at least) a quarter of the elements in $r$ are smaller than $q_r$. 
By choosing $p$ as the minimal $q_r$ across all rows, we guarantee the second requirement (that every row contains an element larger than $p$).
For a single row, we can obtain a likely value $q$ from a random sample in sublinear time, but we cannot afford to repeat this for all rows.
Therefore, we first reduce the number of rows by guessing an upper bound $t$ for $q$ and removing all rows that contain some element larger than $t$; if the ultimate candidate for the pivot $p$ is indeed less than $t$, those discarded rows already satisfy the requirement to contain an entry larger than $p$ and were (with hindsight) justifiably removed. 

Before turning to correctness, let us argue that $\textsc{FindHorizontalPivot}(A)$ runs in $O(m+k)$ time, where $m$ and $k$ are the number of rows and columns of $A$ respectively. %
Each round of the while loop in Phase $1$ runs in $O(|R|)$ time, where $R$ is the current set of rows, and $|R|$ decreases by at least a constant fraction each time, leading to a geometric series bounded by $O(m)$ overall. In Phase $2$, we spend $O(m^{1/20})$ time per row, and there are $O(m^{19/20})$ remaining rows, requiring $O(m)$ time overall. Finally, checking that $p$ is indeed a valid pivot requires looking at its row in $A$, in time $O(k)$.

\subsection{Correctness}

We will make use of the following tail bounds multiple times.
\begin{lemma}[Multiplicative Chernoff~{\cite[Thm.\,1.1]{dubhashi2009concentration}}]\label{lemma:chernoff}
    Let $X_1, \ldots, X_m$ be a sequence of independent Bernoulli random variables (with possibly distinct success probabilities). Let $X = \sum_{i=1}^m X_i$ and $\mu = \E(X)$. Then, for any constant $\epsilon > 0$, as $m\to\infty$,
    \[\Pr[X \geq (1+\epsilon)\mu] \;\leq\; e^{-\Omega(\mu)},\]
    and 
    \[\Pr[X \leq (1-\epsilon)\mu] \;\leq\; e^{-\Omega(\mu)}.\]
\end{lemma}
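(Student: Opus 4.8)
The plan is to prove the two tail bounds by the classical \emph{Chernoff--Bernstein exponential-moment method}, handling the upper and lower tails separately. Throughout, write $p_i = \Pr[X_i=1]$, so $\mu = \sum_{i=1}^m p_i$; the point to keep in mind is that the constant hidden in the $e^{-\Omega(\mu)}$ must depend only on $\epsilon$, and not on $m$ or on the individual probabilities $p_i$.

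For the upper tail I would fix a parameter $\lambda>0$ and apply Markov's inequality to the nonnegative random variable $e^{\lambda X}$:
\[
\Pr[X \geq (1+\epsilon)\mu] \;=\; \Pr\bigl[e^{\lambda X} \geq e^{\lambda(1+\epsilon)\mu}\bigr] \;\leq\; e^{-\lambda(1+\epsilon)\mu}\,\E\bigl[e^{\lambda X}\bigr].
\]
By independence, $\E[e^{\lambda X}] = \prod_{i=1}^m \E[e^{\lambda X_i}] = \prod_{i=1}^m\bigl(1 + p_i(e^\lambda-1)\bigr)$, and applying $1+x \leq e^x$ (before any dependence on the individual $p_i$ enters) bounds this by $\exp\bigl((e^\lambda-1)\sum_i p_i\bigr) = e^{(e^\lambda-1)\mu}$. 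Substituting and choosing $\lambda = \ln(1+\epsilon)$, the minimizer of the resulting exponent, yields
\[
\Pr[X \geq (1+\epsilon)\mu] \;\leq\; \left(\frac{e^\epsilon}{(1+\epsilon)^{1+\epsilon}}\right)^{\!\mu} \;=\; e^{-c^{+}_\epsilon \mu}, \qquad c^{+}_\epsilon := (1+\epsilon)\ln(1+\epsilon)-\epsilon .
\]
It then remains to observe that $c^{+}_\epsilon>0$ for every $\epsilon>0$: this function vanishes at $\epsilon=0$ and has derivative $\ln(1+\epsilon)>0$. Hence the right-hand side is $e^{-\Omega(\mu)}$.

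The lower tail is symmetric: I would apply Markov to $e^{-tX}$ with $t>0$, getting $\Pr[X \leq (1-\epsilon)\mu] \leq e^{t(1-\epsilon)\mu}\E[e^{-tX}] \leq \exp\bigl(t(1-\epsilon)\mu + (e^{-t}-1)\mu\bigr)$ via the same product bound. For $0<\epsilon<1$ the choice $t = \ln\frac{1}{1-\epsilon}>0$ gives $\Pr[X \leq (1-\epsilon)\mu] \leq \bigl(e^{-\epsilon}(1-\epsilon)^{-(1-\epsilon)}\bigr)^{\mu} = e^{-c^{-}_\epsilon\mu}$ with $c^{-}_\epsilon := \epsilon + (1-\epsilon)\ln(1-\epsilon)$, and a one-line Taylor estimate ($(1-\epsilon)\ln(1-\epsilon) = -\epsilon + \epsilon^2/2 + \cdots$) shows $c^{-}_\epsilon \geq \epsilon^2/2 > 0$. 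The degenerate case $\epsilon \geq 1$ is trivial, since then $\{X \leq (1-\epsilon)\mu\} \subseteq \{X = 0\}$ (as $X \geq 0$), an event of probability $\prod_i(1-p_i) \leq e^{-\mu}$.

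I do not expect a genuine obstacle, as this is a textbook result; the only point requiring attention is uniformity of the constants. One must make sure that the optimized exponent coefficients $c^{\pm}_\epsilon$ depend on $\epsilon$ alone — which they manifestly do once the optimization in $\lambda$ (resp.\ $t$) has been carried out — so that the bound $e^{-c^{\pm}_\epsilon\mu}$ legitimately qualifies as $e^{-\Omega(\mu)}$ uniformly over all $m$ and all admissible choices of the $p_i$.
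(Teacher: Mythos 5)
Your proof is correct. The paper does not prove this lemma at all --- it is quoted from Dubhashi--Panconesi [Thm.~1.1] --- and your argument is precisely the standard exponential-moment derivation found there (Markov applied to $e^{\lambda X}$, the bound $\prod_i(1+p_i(e^\lambda-1))\leq e^{(e^\lambda-1)\mu}$, and optimization of $\lambda$), with the uniformity of the constants $c^{\pm}_\epsilon$ in $m$ and the $p_i$ correctly identified as the only point needing care; the separate treatment of $\epsilon\geq 1$ in the lower tail is also right.
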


Let $R$ be a set of rows. %
For each row $r$ of $R$, let $X_r$ be a random variable distributed uniformly at random over the entries of $r$. Let $t>0$ be some threshold value and let $Q = X_{(\lceil3|R|/4\rceil)}$ be the sample third quartile of the set $\{X_r\}_{r\in R}$.

Our goal is to show that in each iteration of Phase 1 in \textsc{FindHorizontalPivot}, the set of rows we keep based on a single random sample from the row ``sufficiently resembles'' the set of rows that would be kept by a deterministic procedure based on the median. We proceed with two lemmas that correspond to the two cases of Line 9, Figure~\ref{fig2}.

\begin{lemma}\label{lemma:phase_1a}
    Let $m$ be an integer, and let $S \subseteq R$ be the set of rows in $R$ whose median is at most $t$. Suppose $|S| \geq \OOmega{m^{2/3}}$. Let $S' = \{s\in S \mid X_s \leq t\}$. Then with probability at least $1-e^{-\OOmega{m^{2/3}}}$, we have $|S'|\geq |S|/5$.
\end{lemma}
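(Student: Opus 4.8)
The plan is to reduce the statement to a single application of the lower-tail Chernoff bound (Lemma~\ref{lemma:chernoff}). The starting observation is that for any row $s \in S$, since its median is at most $t$ and the entries of $A$ are pairwise distinct, at least half of the entries of $s$ are at most $t$: if the $\lceil k/2\rceil$-th smallest entry of $s$ is $\le t$, then so are the (at least) $\lceil k/2\rceil$ entries not exceeding it. Consequently, because $X_s$ is a uniformly random entry of row $s$, we have $\Pr[X_s \le t] \ge \tfrac12$ for every $s \in S$.

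Next I would set $Y_s := \mathbbm{1}[X_s \le t]$ for each $s\in S$, so that $|S'| = Y$, where $Y := \sum_{s\in S} Y_s$. The variables $\{X_s\}_{s\in S}$ are drawn independently across rows, hence the $Y_s$ are independent Bernoulli variables (with possibly different success probabilities, all at least $\tfrac12$), and $\mu := \E[Y] \ge |S|/2$. Since $|S| = \OOmega{m^{2/3}}$, the number of these variables, and hence $\mu$, tends to infinity as $m\to\infty$, so the asymptotic form of Lemma~\ref{lemma:chernoff} is applicable with a constant deviation parameter.

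Finally, using $|S| \le 2\mu$ we get $|S|/5 \le \tfrac25\mu$, so the bad event $|S'| < |S|/5$ is contained in $\{Y \le (1-\tfrac35)\mu\}$. By Lemma~\ref{lemma:chernoff} this has probability at most $e^{-\Omega(\mu)} \le e^{-\Omega(|S|)} = e^{-\OOmega{m^{2/3}}}$, which is exactly the claimed bound.

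I do not expect a real obstacle here; this is essentially a textbook concentration statement. The only points that need a moment of care are the translation between ``median at most $t$'' and ``at least half the entries are at most $t$'' (clean here thanks to distinctness and a fixed convention for the median as an order statistic), and the verification that $\mu$ grows without bound, which is precisely what the hypothesis $|S| = \OOmega{m^{2/3}}$ provides so that Lemma~\ref{lemma:chernoff} can be invoked in its stated asymptotic form.
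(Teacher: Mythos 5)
Your proof is correct and follows the paper's argument essentially verbatim: define the indicators $Y_s=\mathbbm{1}[X_s\le t]$, note $\E[Y]\ge|S|/2$ by the median hypothesis, and apply the lower-tail Chernoff bound with deviation parameter $3/5$ so that $\{|S'|<|S|/5\}\subseteq\{Y\le(1-3/5)\E[Y]\}$. No differences worth noting.
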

\begin{proof}
    For $s\in S$, let %
    $Y_s = 1$ if $X_s \leq t$ and $Y_s=0$ otherwise. Let $Y = \sum_{s\in S}Y_s$. We have $|S'| = Y$, and $\E(Y) = \sum_{s\in S}\E(Y_s) \geq |S|/2$.

    The variable $Y$ is a sum of independent Bernoulli variables. We can thus apply the Chernoff bound of Lemma \ref{lemma:chernoff}:
    \begin{align*}
        \Pr[Y\leq |S|/5] &\leq \Pr[Y\leq (1-3/5)\E(Y)]\\
        &\leq e^{-\OOmega{\E(Y)}}\\
        &\leq e^{-\OOmega{m^{2/3}}}.
    \end{align*}
The claim thus holds.
\end{proof}

\begin{lemma}\label{lemma:phase_1b}
    Suppose $|R| \geq \OOmega{m^{2/3}}$. Let $S \subseteq R$ be the set of rows in $R$ whose median is at most~$Q$. Let $S' = \{s\in S \mid X_s \leq Q\}$. Then with probability at least $1-e^{-\OOmega{m^{2/3}}}$, $|S'|\geq |S|/5$.
\end{lemma}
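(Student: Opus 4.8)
The plan is to imitate the proof of Lemma~\ref{lemma:phase_1a}, but the essential new difficulty is that here the threshold $Q$ is itself a function of the current samples $(X_r)_{r\in R}$, so the indicators $\mathbf{1}[X_r\le Q]$ are \emph{no longer independent} and the Chernoff bound cannot be applied directly to $|S'|$. I would get around this by decoupling the two roles of the randomness: first prove a purely \emph{combinatorial} lower bound on $|S'|$ that is already strong enough whenever the deterministic set $S$ is not too small, and then show, via a Chernoff bound at a \emph{fixed} (non‑random) threshold, that $S$ is indeed large with the claimed probability.

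For the combinatorial part I would use that the matrix entries — hence the sampled values $X_r$ — are pairwise distinct, so exactly $h:=\lceil 3|R|/4\rceil$ of them satisfy $X_r\le Q$ (these are the $h$ smallest). Since $S'=\{r:X_r\le Q\}\cap S$, inclusion–exclusion gives $|S'|\ge |S|+h-|R|=|S|-\lfloor |R|/4\rfloor$. Hence, as soon as $|S|\ge \tfrac{5}{16}|R|$, we get $|R|/4\le\tfrac45|S|$ and therefore $|S'|\ge |S|-\tfrac45|S|=\tfrac15|S|$, with no appeal to randomness at all. It thus remains only to show that $|S|<\tfrac{5}{16}|R|$ occurs with probability at most $e^{-\Omega(m^{2/3})}$.

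For this probabilistic part I would introduce the fixed threshold $\tau^\ast$ defined as the $\lceil\tfrac38|R|\rceil$-th smallest value among the medians $\{\operatorname{median}(r)\}_{r\in R}$ (these $|R|$ values are distinct, each being an entry located in a distinct row). Because $|S|=|\{r:\operatorname{median}(r)\le Q\}|$ and $\lceil\tfrac38|R|\rceil>\tfrac{5}{16}|R|$, the event $|S|<\tfrac{5}{16}|R|$ is contained in $\{Q<\tau^\ast\}$; and since the $X_r$ are distinct, $\{Q<\tau^\ast\}=\{Z\ge h\}$ where $Z:=|\{r:X_r<\tau^\ast\}|=\sum_{r\in R}\mathbf{1}[X_r<\tau^\ast]$ is a sum of \emph{independent} Bernoulli variables. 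Splitting $R$ according to whether $\operatorname{median}(r)\le\tau^\ast$ or $\operatorname{median}(r)>\tau^\ast$ and using that fewer than $k/2$ entries of any row lie strictly below its median (so $\Pr[X_r<\tau^\ast]\le\tfrac12$ in the second group), a short computation gives $\mu:=\E[Z]\le\bigl(\tfrac{11}{16}+o(1)\bigr)|R|$, which is a constant factor below $h\ge\tfrac34|R|$; on the other hand $\mu=\Omega(|R|)=\Omega(m^{2/3})$, since the $\lceil\tfrac38|R|\rceil-1$ rows with median strictly below $\tau^\ast$ each contribute probability at least $\tfrac12$. Lemma~\ref{lemma:chernoff} then yields $\Pr[Z\ge h]\le\Pr[Z\ge(1+\epsilon)\mu]\le e^{-\Omega(\mu)}\le e^{-\Omega(m^{2/3})}$ for a suitable constant $\epsilon>0$, and combining with the combinatorial bound finishes the proof.

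The one step I expect to require real care is the first: recognising that the correlation between $Q$ and the samples is best neutralised by a hybrid deterministic/probabilistic argument — a tight inclusion–exclusion bound together with concentration of $Q$ around a fixed quantile of the row medians — rather than by forcing a direct concentration argument on $|S'|$. Once that structure and the fixed comparison threshold $\tau^\ast$ are in place, the remaining work (choosing the constants $\tfrac{5}{16}$, $\tfrac38$, $\tfrac{11}{16}$, handling floors and ceilings, and distinguishing $<$ from $\le$) is routine.
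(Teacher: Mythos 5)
Your proof is correct, but it takes a genuinely different route from the paper's. The paper applies the Chernoff bound to the independent indicators $\mathbf{1}[X_r \ge m_r]$ (sample at or above its own row's median), concluding that with probability $1-e^{-\Omega(m^{2/3})}$ at least $\tfrac{9}{20}|R|$ rows satisfy $X_r \ge m_r$; since by definition of $Q$ at most $\tfrac14|R|$ rows have $X_r > Q$, at least $\tfrac{9}{20}|R|-\tfrac14|R| = \tfrac15|R| \ge \tfrac15|S|$ rows satisfy $m_r \le X_r \le Q$ and hence lie in $S'$. So the paper neutralises the dependence on the random threshold $Q$ by the same deterministic observation you use (only $\lfloor|R|/4\rfloor$ samples can exceed $Q$), but folds it directly into an intersection count, with no need for your auxiliary fixed quantile $\tau^\ast$ of the row medians or the two-stage structure (first $|S|\ge\tfrac{5}{16}|R|$ with high probability, then inclusion--exclusion). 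Your decomposition does check out: the combinatorial bound $|S'| \ge |S| - \lfloor|R|/4\rfloor$, the containment of $\{|S| < \tfrac{5}{16}|R|\}$ in the independent-sum event $\{Z \ge h\}$, and the estimates $\mu \le (\tfrac{11}{16}+o(1))|R|$ versus $h \ge \tfrac34|R|$ with $\mu = \Omega(m^{2/3})$ together make Lemma~\ref{lemma:chernoff} applicable with a fixed $\epsilon$. What your version buys is a reusable, more modular intermediate fact (concentration of $Q$ above a fixed quantile of the row medians); what the paper's buys is brevity and less constant-chasing.
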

\begin{proof}
    For $r\in R$, let $Y_r = 1$ if $X_r \geq m_r$ and $Y_r=0$ otherwise, where $m_r$ denotes the median value of $r$. Let $Y = \sum_{r\in R}Y_r$. We have $\E(Y) = \sum_{r\in R}\E(Y_r) \geq |R|/2$.

    The variable $Y$ is a sum of independent Bernoulli variables. We can thus again apply the Chernoff bound of Lemma \ref{lemma:chernoff}:
    \begin{align*}
        \Pr\left[Y\leq \frac{9}{20}|R|\right] &\leq \Pr\left[Y\leq \left(1-\frac{1}{10}\right)\E(Y)\right]\\
        &\leq e^{-\OOmega{\E(Y)}}\\
        &\leq e^{-\OOmega{m^{2/3}}}.
    \end{align*}
Thus, with probability at least $1-e^{-\OOmega{m^{2/3}}}$ there are at least $\frac{9}{20}|R|$ rows in $R$ whose median is below the sampled value. Among those, at most $\frac{1}{4}|R|$ have a sampled value greater than $Q$ (by definition of $Q$). We conclude that at least $\frac{9}{20}|R|-\frac{1}{4}|R| = \frac{1}{5}|R|$ rows have a sampled value above their median and below $Q$. These rows are all in $S'$, therefore $|S'| \geq \frac{1}{5}|R| \geq \frac{1}{5}|S|$.
\end{proof}

\begin{lemma}
    Let $S$ be the set of rows in $R$ whose median is at most $\min\{t,Q\}$. Suppose $|S| \geq \OOmega{m^{2/3}}$. Let $S' = \{s\in S \mid X_s \leq \min\{t,Q\}\}$. Then, with probability at least $1-e^{-\OOmega{m^{2/3}}}$, $|S'|\geq |S|/5$.
\end{lemma}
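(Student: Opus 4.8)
The plan is to obtain this lemma as a direct combination of Lemmas~\ref{lemma:phase_1a} and~\ref{lemma:phase_1b} via a two-case analysis on which of the thresholds $t$ and $Q$ is smaller. The key preliminary observation is that, although $S$ (and $Q$, and $S'$) are random, everything can be phrased in terms of two \emph{fixed} supersets of $S$ whose cardinalities are pinned down by the hypothesis. Write $S_t$ for the set of rows of $R$ whose median is at most $t$, and $S_Q$ for the set of rows of $R$ whose median is at most $Q$. A row lies in $S$ exactly when its median is at most both $t$ and $Q$, so $S = S_t \cap S_Q$, and in particular $S \subseteq S_t$ and $S \subseteq S_Q \subseteq R$. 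Consequently the assumption $|S| \geq \OOmega{m^{2/3}}$ forces $|S_t| \geq \OOmega{m^{2/3}}$ and $|R| \geq \OOmega{m^{2/3}}$.

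Next I would invoke the two earlier lemmas on these fixed sets. Applying Lemma~\ref{lemma:phase_1a} with $S_t$ playing the role of its ``$S$'' (legitimate since $|S_t| \geq \OOmega{m^{2/3}}$) shows that, with probability at least $1 - e^{-\OOmega{m^{2/3}}}$, the set $S'_t := \{s \in S_t \mid X_s \leq t\}$ satisfies $|S'_t| \geq |S_t|/5$. Applying Lemma~\ref{lemma:phase_1b} (legitimate since $|R| \geq \OOmega{m^{2/3}}$) shows that, with probability at least $1 - e^{-\OOmega{m^{2/3}}}$, the set $S'_Q := \{s \in S_Q \mid X_s \leq Q\}$ satisfies $|S'_Q| \geq |S_Q|/5$. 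By a union bound both conclusions hold simultaneously with probability at least $1 - e^{-\OOmega{m^{2/3}}}$, and I would condition on this event. Now split on cases: if $t \leq Q$, then $\min\{t,Q\} = t$ and $S_t \subseteq S_Q$, hence $S = S_t$ and $S' = \{s \in S_t \mid X_s \leq t\} = S'_t$, so $|S'| = |S'_t| \geq |S_t|/5 = |S|/5$; if instead $Q < t$, then symmetrically $S = S_Q$ and $S' = S'_Q$, so $|S'| = |S'_Q| \geq |S_Q|/5 = |S|/5$. In either case $|S'| \geq |S|/5$, as required.

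The only delicate point --- and the ``main obstacle'', such as it is --- is keeping the probabilistic bookkeeping honest: the event $t \leq Q$ is random and correlated with exactly the samples $\{X_r\}_{r \in R}$ that the two sub-lemmas reason about, so one must not apply them ``conditionally on $t \leq Q$''. The fix, as above, is to apply each sub-lemma \emph{unconditionally} to a fixed set ($S_t$, respectively $R$) whose size the hypothesis controls through the inclusions $S \subseteq S_t$ and $S \subseteq R$, secure both good events with a single union bound, and only then perform the (deterministic) case analysis. Everything else is routine.
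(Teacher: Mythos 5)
Your proposal is correct and follows the same route as the paper, which simply declares the lemma ``immediate from the previous two lemmas'' (Lemmas~\ref{lemma:phase_1a} and~\ref{lemma:phase_1b}). Your write-up just makes explicit the case split on $\min\{t,Q\}$ and the point that the union bound should be taken over the two unconditional good events before the (random) case distinction is made --- a care that the paper leaves implicit.
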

\begin{proof}
This is immediate from the previous two lemmas.
\end{proof}

Because in each round of the while loop in Phase $1$, a quarter of all rows are deleted, the loop runs at most $N=\frac{1}{20}\log_{4/3}(m) + O(1)$ times. By the previous lemma (together with a union bound), we get the following.

\begin{proposition}\label{prop:phase1}
    At the end of the $i$-th iteration of the while loop in Phase $1$, with probability at least $1-i\cdot e^{-\OOmega{m^{2/3}}}$, at least $m\cdot(\frac{1}{5})^i$ of the rows in $R$ have median at most $t$. 
    
    In particular, after the last iteration of the loop, with probability at least $1-N\cdot e^{-\OOmega{m^{2/3}}} = 1- e^{-\OOmega{m^{2/3}}}$, at least $m\cdot(\frac{1}{5})^N \geq \OOmega{m\cdot5^{-\frac{1}{20}\log_{4/3}(m)}} \geq \OOmega{m^{2/3}}$ of the rows in $R$ have median at most $t$.
\end{proposition}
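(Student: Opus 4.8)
The plan is to prove the first claim by induction on $i$, and then deduce the ``in particular'' part by applying it to the last (at most $N$-th) pass through the while loop. Write $R_i$ and $t_{i-1}$ for the set of rows and the value of $t$ at the start of the $i$-th pass (so $|R_1|=m$ and $t_0=\infty$), let $Q_i$ be the value $q$ computed in that pass (the sample third quartile of the $q_r$), so that afterwards $t_i=\min\{t_{i-1},Q_i\}$ and $R_{i+1}=\{r\in R_i: q_r\le t_i\}$, and let $N_i$ be the number of rows of $R_{i+1}$ whose median is at most $t_i$ (exactly the quantity the proposition bounds). I want $\Pr[N_i\ge m(1/5)^i]\ge 1-i\cdot e^{-\Omega(m^{2/3})}$. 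The base case $i=0$ is immediate: $t_0=\infty$ forces $N_0=|R_1|=m$.

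For the inductive step I would condition on the event $E_{i-1}=\{N_{i-1}\ge m(1/5)^{i-1}\}$ (which has probability $\ge 1-(i-1)e^{-\Omega(m^{2/3})}$ by induction), and in fact on any history of the first $i-1$ passes consistent with it; then $R_i$ and $t_{i-1}$ are fixed while the samples $q_r$ drawn in the $i$-th pass are fresh and independent, so Lemma~\ref{lemma:phase_1a} (with threshold $t_{i-1}$) and Lemma~\ref{lemma:phase_1b} both apply to the set $R_i$. Note first that the $N_{i-1}$ rows tracked at the previous step still lie in $R_i$, so on $E_{i-1}$ we have $|R_i|\ge N_{i-1}\ge m(1/5)^{i-1}$; moreover $|R_i|>\lfloor m^{19/20}\rfloor$ since we are still in the loop. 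Now split into the two cases of Line~9 of Figure~\ref{fig2}, according to whether the deletion threshold $t_i$ equals the old value $t_{i-1}$ or the fresh sample quantile $Q_i$ --- the respective regimes covered by Lemmas~\ref{lemma:phase_1a} and~\ref{lemma:phase_1b}. If $Q_i\ge t_{i-1}$, then $t_i=t_{i-1}$ and the set $S$ of rows of $R_i$ with median at most $t_i$ is exactly the one counted by $N_{i-1}$; since $i-1<N=\tfrac1{20}\log_{4/3}m+O(1)$ and $\tfrac1{20}\log_{4/3}5<\tfrac13$, we get $|S|=N_{i-1}\ge m(1/5)^{i-1}\ge m(1/5)^N=\Omega(m^{2/3})$, so Lemma~\ref{lemma:phase_1a} produces, with probability $\ge 1-e^{-\Omega(m^{2/3})}$, a set $S'$ of rows of $R_{i+1}$ with median $\le t_i$ and $|S'|\ge|S|/5=N_{i-1}/5\ge m(1/5)^i$. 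If instead $Q_i<t_{i-1}$, then $t_i=Q_i$; here I would invoke Lemma~\ref{lemma:phase_1b} for $R_i$ (its hypothesis $|R_i|\ge\Omega(m^{2/3})$ holds since $|R_i|>\lfloor m^{19/20}\rfloor$), whose proof in fact exhibits, with probability $\ge 1-e^{-\Omega(m^{2/3})}$, at least $\tfrac15|R_i|$ rows of $R_i$ whose sample lies between their median and $Q_i$ (all of which survive into $R_{i+1}$ and have median $\le Q_i=t_i$), so $N_i\ge\tfrac15|R_i|\ge\tfrac15 N_{i-1}\ge m(1/5)^i$. In either case a union bound (over $E_{i-1}$ and the current pass's failure event) gives $\Pr[N_i\ge m(1/5)^i]\ge 1-i\cdot e^{-\Omega(m^{2/3})}$, closing the induction.

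For the ``in particular'' part: the loop makes at most $N=\tfrac1{20}\log_{4/3}m+O(1)$ passes, so applying the induction to the final pass and a union bound over all $\le N$ passes gives that, with probability at least $1-N\cdot e^{-\Omega(m^{2/3})}=1-e^{-\Omega(m^{2/3})}$ (the polylogarithmic factor $N$ being absorbed into the exponential), the final $R$ contains at least $m(1/5)^N\ge\Omega(m\cdot 5^{-\frac1{20}\log_{4/3}m})\ge\Omega(m^{2/3})$ rows with median at most $t$, as claimed.

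I expect the bulk of the work to lie in the inductive step, and specifically in the case $Q_i<t_{i-1}$ where $t$ is lowered: there the set $S$ of rows whose median is below the \emph{new} threshold $Q_i$ may be much smaller than the $N_{i-1}$ rows carried by the induction, so the stated bound $|S'|\ge|S|/5$ of Lemma~\ref{lemma:phase_1b} is not enough (it would lose a factor of $5$ relative to what is needed), and one has to extract from its proof the stronger $|S'|\ge\tfrac15|R_i|$ and combine it with the fact that the carried rows are still present ($|R_i|\ge N_{i-1}$). A secondary subtlety is checking that the hypothesis $|S|=\Omega(m^{2/3})$ of Lemma~\ref{lemma:phase_1a} survives all passes, which relies on there being only $N=O(\log m)$ of them and on the numerical inequality $\tfrac1{20}\log_{4/3}5<\tfrac13$; and one must condition on the past before invoking the tail bounds so that the current pass's samples are genuinely independent of $R_i$ and $t_{i-1}$.
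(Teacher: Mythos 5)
Your proof is correct and follows the route the paper's one-line justification intends: induction over the iterations of the while loop, invoking Lemma~\ref{lemma:phase_1a} or Lemma~\ref{lemma:phase_1b} according to whether $t$ is unchanged or lowered to $Q$, and a union bound over the at most $N$ iterations. You are also right about the subtlety you flag: when $t$ drops to $Q$, the stated conclusion $|S'|\geq|S|/5$ of the combined lemma is not enough to carry the induction (it would only yield a factor $1/25$ per step), and one must use the stronger bound $|S'|\geq\frac{1}{5}|R|$ that the proof of Lemma~\ref{lemma:phase_1b} actually establishes, together with $|R_i|\geq N_{i-1}$ --- a detail the paper glosses over but which your write-up handles correctly.
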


In Phase $2$ we aim to pick an element from each remaining row that is simultaneously below the median of the row (and thus, with some probability, below the threshold $t$), and above a quarter of the elements of the row (and thus, a good candidate for being a horizontal pivot). The following lemma ensures this. 
\begin{lemma}
    Let $k$ be an integer, and let $r$ be a row of $k$ distinct values. Sample $c=\floor{m^{1/20}}$ entries of row $r$  uniformly at random, with replacement: $Y_1,\ldots, Y_c$. Let $Y$ be the $\floor{\frac{2}{5}m^{1/20}}$-th order statistic of $\{Y_1,\ldots, Y_c\}$. With probability at least $1-e^{-\OOmega{m^{1/20}}}$, $Y$ is between the $\floor{k/4}$-th smallest element and the median of row $r$.
\end{lemma}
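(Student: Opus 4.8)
The plan is to analyze two separate bad events and bound each by a Chernoff tail. Fix the row $r$ with its $k$ distinct entries, and recall that sampling $c = \floor{m^{1/20}}$ entries uniformly with replacement means each $Y_j$ is uniform over the $k$ entries. Let $L$ be the set of the $\floor{k/4}$ smallest entries of $r$, and let $M$ be the set of entries strictly larger than the median (so $|M| \ge \ceil{k/2}$, hence at least half). We want the order statistic $Y = Y_{(\floor{(2/5)c})}$ to avoid landing in $L$ (too small) and to avoid landing in $M$ (too large). The key observation is that $Y \in L$ only if \emph{at least} $\floor{(2/5)c}$ of the samples fall in $L$, and $Y > \text{median}$ only if \emph{fewer than} $\floor{(2/5)c}$ samples fall in the complement of $M$, i.e., if more than $c - \floor{(2/5)c} \approx (3/5)c$ samples fall in $M$.

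First I would handle the ``too small'' event. Let $Z_j = \mathbbm{1}[Y_j \in L]$, so the $Z_j$ are i.i.d.\ Bernoulli with success probability $p = \floor{k/4}/k \le 1/4$, and $Z = \sum_j Z_j$ has $\mu = \E(Z) = cp \le c/4$. The bad event $Y \in L$ implies $Z \ge \floor{(2/5)c}$. Since $\floor{(2/5)c} \ge (2/5)c - 1$, and $(2/5)c$ is a constant factor above $\mu \le c/4$ (the ratio is at least $8/5 > 1$), Lemma~\ref{lemma:chernoff} with a suitable constant $\epsilon > 0$ gives $\Pr[Z \ge \floor{(2/5)c}] \le \Pr[Z \ge (1+\epsilon)\mu] \le e^{-\OOmega{\mu}}$. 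Here I need $\mu = \OOmega{c} = \OOmega{m^{1/20}}$, which holds as long as $p$ is bounded away from $0$; the only subtlety is that if $k$ is tiny then $\floor{k/4}$ could be $0$ and the event is simply impossible, so we may assume $k \ge 4$ (for smaller $k$ the claim is trivial or the algorithm checks directly on Line 17). This yields failure probability $e^{-\OOmega{m^{1/20}}}$ for this event.

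Next I would handle the ``too large'' event symmetrically. Let $W_j = \mathbbm{1}[Y_j \text{ is} > \text{the median of } r]$; these are i.i.d.\ Bernoulli with success probability $q \ge 1/2 - 1/k$, essentially $1/2$, and $W = \sum_j W_j$ has $\mu' = \E(W) \approx c/2$. The event $Y > \text{median}$ forces $W > c - \floor{(2/5)c} \ge (3/5)c - O(1)$, which exceeds $\mu' \approx c/2$ by a constant factor $(6/5 > 1)$, so again by the upper-tail bound of Lemma~\ref{lemma:chernoff} this has probability $e^{-\OOmega{\mu'}} = e^{-\OOmega{m^{1/20}}}$. A union bound over the two events gives the claimed overall failure probability $e^{-\OOmega{m^{1/20}}}$, and on the complementary event $Y$ lies strictly between the $\floor{k/4}$-th smallest element and the median, as required.

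The main obstacle is purely the bookkeeping of constants: one must check that the gap between the target order-statistic rank $\floor{(2/5)c}$ and the relevant expectations ($\le c/4$ on one side, $\approx c/2$ on the other) is genuinely a constant multiplicative factor \emph{after} accounting for the floor functions and for $k$ possibly being small, so that Lemma~\ref{lemma:chernoff} applies with a fixed $\epsilon$. The choice $2/5$ is precisely what makes both sides work: it sits strictly between $1/4$ and $1/2$. The concentration itself is then immediate. One should also note the boundary regime where $k$ is constant: there the statement is vacuous or reduces to checking a row of constant length, so the interesting case is $k \to \infty$ alongside $m \to \infty$, and the asymptotic Chernoff statement of Lemma~\ref{lemma:chernoff} is used in the variable $c = \floor{m^{1/20}} \to \infty$.
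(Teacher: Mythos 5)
Your proof is correct and takes essentially the same route as the paper's: you bound, via two separate Chernoff upper-tail estimates, the number of samples falling below the $\floor{k/4}$-th smallest element (expectation at most about $c/4$) and the number falling above the median (expectation at most about $c/2$), compare both against the rank threshold $\frac{2}{5}m^{1/20}$, and finish with a union bound. Your explicit handling of the small-$k$ regime (where the relevant expectation might not be $\OOmega{m^{1/20}}$ and the stated Chernoff form would be too weak) is a detail the paper glosses over, but otherwise the arguments coincide.
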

\begin{proof}
    Let $\ell_r$ be the $\floor{k/4}$-th smallest element of row $r$ and let $m_r$ be the median of row $r$.
    Let $Z_\ell$ be the number of variables among $Y_1,\ldots, Y_c$ which are smaller than $\ell_r$, and let $Z_m$ be the number of variables among $Y_1,\ldots, Y_c$ which are larger than $m_r$.
    Both $Z_\ell$ and $Z_m$ can be represented as sums of independent Bernoulli variables and have respective expectation $\E(Z_\ell) \leq \frac{1}{4}m^{1/20}$ and $\E(Z_m) \leq \frac{1}{2}m^{1/20}$.
    By the Chernoff bound of Lemma \ref{lemma:chernoff}:
    \begin{align*}
        \Pr\left[Z_\ell \geq \frac{2}{5}m^{1/20}\right] &\leq \Pr\left[Z_\ell \geq \left(1+\frac{3}{5}\right)\E(Z_\ell)\right]\\
        &\leq e^{-\OOmega{\E(Z_\ell)}}\\
        &\leq e^{-\OOmega{m^{1/20}}}.
    \end{align*}

    Similarly, we have:
    \begin{align*}
        \Pr\left[Z_{m} \geq \frac{3}{5}m^{1/20}\right] & \leq \Pr\left[Z_{m} \geq \left(1+\frac{1}{5}\right)\E(Z_{m})\right]\\
        &\leq e^{-\OOmega{m^{1/20}}}.
    \end{align*}

    By the union bound, the probability that $Z_\ell \geq \frac{2}{5}m^{1/20}$ or $Z_{m} \geq \frac{3}{5}m^{1/20}$ is at most $e^{-\OOmega{m^{1/20}}}$. This implies that with probability at least $1-e^{-\OOmega{m^{1/20}}}$, $Y$ is between $\ell_r$ and $m_r$, which proves the claim.
\end{proof}

Using the previous lemma together with a union bound across the ${{m^{19/20}}}$ remaining rows, we get the following.
\begin{proposition}\label{prop:phase2}
    In Phase 2, with probability at least $1-e^{-\OOmega{m^{1/20}}}$, the selected value for every row is between its $\floor{k/4}$-th smallest value and its median, where $k$ is the width of $A$.
\end{proposition}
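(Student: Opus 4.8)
The plan is to derive Proposition~\ref{prop:phase2} directly from the single-row lemma stated immediately above it, via a union bound over the rows surviving into Phase~2. I would first record the book-keeping: by the while-loop condition of Phase~1, when Phase~2 begins the set $R$ has at most $\floor{m^{19/20}}$ rows, and for each surviving row $r$ the values $x_1,\ldots,x_{\floor{m^{1/20}}}$ computed on lines~13--14 of Figure~\ref{fig2} are exactly $\floor{m^{1/20}}$ independent uniform samples (with replacement) from the $k$ entries of $r$, while $q'_r$ is their $\floor{\tfrac{2}{5}m^{1/20}}$-th order statistic. Since the input matrix has distinct entries, each row consists of $k$ distinct values, so this is precisely the setting of the preceding lemma. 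Hence for each fixed surviving row $r$, the probability that $q'_r$ fails to lie between the $\floor{k/4}$-th smallest entry of $r$ and the median of $r$ is at most $e^{-\OOmega{m^{1/20}}}$.

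Next, because the samples drawn for different rows come from fresh, independent \textsc{Rand}$(k)$ calls, I would apply the union bound over the at most $\floor{m^{19/20}}$ rows: the probability that the desired property fails for \emph{some} surviving row is at most $\floor{m^{19/20}}\cdot e^{-\OOmega{m^{1/20}}}$.

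Finally, I would argue that this bound is itself of the form $e^{-\OOmega{m^{1/20}}}$. Writing $m^{19/20}=e^{\frac{19}{20}\ln m}$ and using that $\ln m = o(m^{1/20})$ as $m\to\infty$, the polynomial factor is absorbed into the exponent:
\[
    \floor{m^{19/20}}\cdot e^{-\OOmega{m^{1/20}}} \;\leq\; e^{\frac{19}{20}\ln m - \OOmega{m^{1/20}}} \;=\; e^{-\OOmega{m^{1/20}}}.
\]
Thus with probability at least $1-e^{-\OOmega{m^{1/20}}}$ the value $q'_r$ selected for every surviving row $r$ lies between its $\floor{k/4}$-th smallest value and its median, which is the statement of the proposition.

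The only real subtlety is this last step: one must check that the polynomial blow-up incurred by union-bounding over $m^{19/20}$ rows does not swamp the per-row failure probability $e^{-\OOmega{m^{1/20}}}$. This is exactly why the exponents $19/20$ (rows remaining) and $1/20$ (samples per row) are chosen as they are, and it rests on the standard fact that $\log m$ grows more slowly than any fixed positive power of $m$; everything else is a verbatim invocation of the preceding lemma together with the union bound.
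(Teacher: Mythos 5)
Your proposal is correct and takes exactly the paper's route: the paper's entire proof is the one-line remark ``Using the previous lemma together with a union bound across the $m^{19/20}$ remaining rows,'' and you have simply made that explicit, including the (correct and worth-stating) check that the polynomial factor $m^{19/20}$ is absorbed into the exponential bound $e^{-\Omega(m^{1/20})}$ since $\log m = o(m^{1/20})$.
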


The correctness of the algorithm (with high probability) is now easy to establish.
\begin{theorem}
    Let $A$ be a matrix with $m$ rows. Then $\textsc{FindHorizontalPivot}(A)$ finds a horizontal pivot of $A$ with probability at least $1-e^{-\OOmega{m^{1/20}}}$. 
\end{theorem}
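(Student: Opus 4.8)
The cleanest route is to first observe that \textsc{FindHorizontalPivot} is \emph{always} correct on the runs where it returns a value $p$ (as opposed to \textsc{Failed}): the procedure returns \textsc{Failed} exactly when $p>t$ or $p$ fails to be larger than $\lfloor k/4\rfloor$ entries of its row in $A$ (where $k$ is the number of columns of $A$). On a non-\textsc{Failed} run the second condition is negated, so at least $\lfloor k/4\rfloor$ entries of $p$'s row lie below $p$ — this is the ``a quarter of the row is smaller'' requirement. For the other defining property of a horizontal pivot — that every row of $A$ contains an entry at least as large as $p$ — note that each row of $A$ either survives Phase~1, in which case it contains the sampled entry $q'_r\ge \min_{r'} q'_{r'}=p$, or is deleted in some iteration of Phase~1, in which case its sample $q_r$ in that iteration exceeded the threshold $t$ of that iteration; since $t$ is non-increasing across iterations this is at least the final value of $t$, which is $\ge p$ because the first condition is also negated on a non-\textsc{Failed} run. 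Hence $q_r>p$. So on non-\textsc{Failed} runs the output is a genuine horizontal pivot, and it only remains to bound the probability of returning \textsc{Failed}.

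To that end, condition on the conjunction $\mathcal{E}$ of the two good events supplied by Propositions~\ref{prop:phase1} and~\ref{prop:phase2}: after the last Phase~1 iteration at least $\OOmega{m^{2/3}}$ of the remaining rows have median at most~$t$, and every row surviving into Phase~2 receives a selected value $q'_r$ lying between the $\lfloor k/4\rfloor$-th smallest entry of that row and its median~$m_r$. By the union bound, $\Pr[\mathcal{E}]\ge 1-e^{-\OOmega{m^{2/3}}}-e^{-\OOmega{m^{1/20}}}=1-e^{-\OOmega{m^{1/20}}}$. I claim $\mathcal{E}$ forces a non-\textsc{Failed} run. Pick a surviving row $r_0$ with $m_{r_0}\le t$ (such a row exists, as that count exceeds $1$ for large~$m$); by the Phase~2 event $q'_{r_0}\le m_{r_0}\le t$, so $p=\min_{r'}q'_{r'}\le q'_{r_0}\le t$, which negates the condition $p>t$. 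Letting $r^\ast$ attain the minimum, the Phase~2 event gives that $p=q'_{r^\ast}$ is at least the $\lfloor k/4\rfloor$-th smallest entry of row~$r^\ast$, so $p$ is larger than $\lfloor k/4\rfloor$ entries of its row, which negates the second condition. Therefore $\Pr[\textsc{Failed}]\le 1-\Pr[\mathcal{E}]\le e^{-\OOmega{m^{1/20}}}$, and combined with the first paragraph this proves the claimed bound.

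The one place needing genuine care is the chain $m_{r_0}\le t\Rightarrow q'_{r_0}\le t\Rightarrow p\le t$, and then feeding $p\le t$ back into the ``deleted-row'' case so that a single globally minimal candidate $p$ certifies, after the fact, every row thrown out in Phase~1 against its own (larger) threshold; here the monotonicity of $t$ across iterations is what makes the argument go through. A secondary, purely bookkeeping point is checking that ``$p$ is at least the $\lfloor k/4\rfloor$-th smallest entry'' from Proposition~\ref{prop:phase2} really delivers the strict count ``$p$ is larger than $\lfloor k/4\rfloor$ entries'' demanded by the test — this is exactly what the choice of order-statistic index $\lfloor\frac{2}{5}m^{1/20}\rfloor$ against the threshold $\frac{2}{5}m^{1/20}$ in the Chernoff step is designed to guarantee, and it should simply be verified in passing.
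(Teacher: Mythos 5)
Your proof is correct and follows essentially the same route as the paper's: condition on the events of Propositions~\ref{prop:phase1} and~\ref{prop:phase2}, take a union bound, and use the chain $m_{r_0}\le t\Rightarrow q'_{r_0}\le t\Rightarrow p\le t$ together with the monotonicity of $t$ to certify the rows deleted in Phase~1. Your restructuring into ``non-\textsc{Failed} runs are always correct'' plus ``the good events force a non-\textsc{Failed} run'' is a slightly cleaner packaging of the same argument, and the off-by-one you flag at the $\lfloor k/4\rfloor$-th order statistic is glossed over in the paper's own proof as well.
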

\begin{proof}
    Both phases of the algorithm succeed (i.e., the events described in Propositions \ref{prop:phase1} and \ref{prop:phase2} happen) with probability at least $1-e^{-\OOmega{m^{1/20}}}$, by union bound. Assuming they indeed have, let $p$ be the value of the entry returned by $\textsc{FindHorizontalPivot}(A)$. Because by Proposition \ref{prop:phase1} there is a row with median smaller than $t$, we know by Proposition \ref{prop:phase2} that $p < t$. By our choice of $t$, every row deleted in Phase 1 has a value larger than $t$. Moreover, by the definition of $p$, every row remaining in $R$ by the end of the algorithm has a value larger than $p$. Thus, every row of $A$ has at least one value larger than $p$, and at least a quarter of the values in the row of $p$ are smaller than $p$ (again, by Proposition \ref{prop:phase2}). In short, $p$ is a horizontal pivot of $A$.
\end{proof}

\section{Sampling, derandomization and further remarks}\label{sec4}

{As mentioned before, we assume $\textsc{Rand}(k)$ to return, on each call, a uniform random integer from $\{1,\dots,k\}$ in $O(1)$ amortized time. This can be justified as follows. For each call of \textsc{FindHorizontalPivot} with $m$ rows, \textsc{Rand} is called in Phase 1 at most $m + m(3/4) + m(3/4)^2 + \cdots \leq 4m$ times, and in Phase 2 at most $m^{1/20}$ times for each of the $m^{19/20}$ rows, for a total of $5m$ times. A similar analysis applies to \textsc{FindVerticalPivot}. 
As the number of rows and columns is initially $n$ and decreases geometrically for subsequent $\textsc{FindPivot}$ calls, the overall number of calls to \textsc{Rand} is $O(n)$, with its input $k$ always at most $n$. We assume therefore that a sequence $S$ of $O(n)$ uniform random integers on $\ceil{\log_2{n}}$ bits are available upfront. We implement \textsc{Rand}$(k)$ by standard rejection sampling, e.g., see~\cite[\S\,1]{motwani1995}: take the $\ceil{\log_2{k}}$ least significant bits of the next integer from $S$, rejecting the value if it is out of bounds (this happens with probability at most $1/2$). To ensure that we do not run out of random numbers with probability at least $1-e^{-\Omega(n)}$ (which leaves the overall analysis of our algorithm unaffected), we just need, by Lemma \ref{lemma:chernoff}, a small constant times as many samples in $S$ than foreseen.}

\subparagraph*{Derandomization.}

To reduce the amount of randomness used, we replace the $O(n)$ uniform random {integers} in $\{1,\ldots, 2^{\ceil{\log_2{n}}}\}$ needed for sampling by a set of $O(n)$ random {integers} that are $d$-wise independent for some sufficiently large $d \in O(1)$. These can be generated from $O(\log{n})$ uniform random bits in $O(n)$ time (assuming the word RAM model of computation), {using known techniques based on polynomials with random coefficients~\cite[\S\,3]{vadhan2012pseudorandomness}.} 

Note however that if the procedure \textsc{ReduceMatrix} fails and needs to be repeated, then we do need fresh randomness for our analysis to go through.

In the analysis of the algorithm with $d$-wise independent random {integers}, we make use of the following lemma, replacing the Chernoff bounds of the previous section {and of the rejection sampling}. 

\begin{lemma}[\cite{bellare1994randomness}]\label{lemma:tail_bound}
    Let $d>0$ be an even constant and let $\{X_1, \ldots, X_m\}$ be a set of $d$-wise independent Bernoulli random variables (with possibly distinct success probabilities). Let $X = \sum_{i=1}^m X_i$ and $\mu = \E(X)$. Then, for any constant $\epsilon > 0$, as $m\to\infty$,
    \[Pr[X \geq (1+\epsilon)\mu] \;\leq\; \OO{\left(\frac{md}{\mu^2}\right)^{d/2}},\]
    and 
    \[Pr[X \leq (1-\epsilon)\mu] \;\leq\; \OO{\left(\frac{md}{\mu^2}\right)^{d/2}}.\]
\end{lemma}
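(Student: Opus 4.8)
The plan is to prove this via the \emph{$d$-th moment method}: since $d$ is even, $(X-\mu)^d$ is a nonnegative random variable, so a single application of Markov's inequality,
\[\Pr\bigl[\,|X-\mu|\geq\epsilon\mu\,\bigr] \;=\; \Pr\bigl[(X-\mu)^d \geq (\epsilon\mu)^d\bigr] \;\leq\; \frac{\E\bigl[(X-\mu)^d\bigr]}{(\epsilon\mu)^d},\]
handles both the upper and the lower tail simultaneously. Everything then reduces to bounding the central moment $\E[(X-\mu)^d]$, and I would aim for $\E[(X-\mu)^d]\leq(Cd\mu)^{d/2}$ with $C$ an absolute constant; indeed, then the right-hand side above is at most $\epsilon^{-d}(Cd/\mu)^{d/2}\leq\epsilon^{-d}(Cmd/\mu^2)^{d/2}$ (using $\mu=\sum_ip_i\leq m$), and for constant $\epsilon$ and $d$ the prefactor $\epsilon^{-d}C^{d/2}$ is absorbed into the $\OO{\cdot}$. (If $\mu<1$ the claimed bound already exceeds $1$, so I would assume $\mu\geq1$ throughout.)

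To bound the central moment I would write $X-\mu=\sum_{i=1}^m Z_i$ with $Z_i=X_i-p_i$ (centered, $|Z_i|\leq1$, $\E[Z_i^2]=p_i(1-p_i)\leq p_i$), expand $\bigl(\sum_i Z_i\bigr)^d$ multinomially, and group the index tuples $(i_1,\dots,i_d)\in[m]^d$ by the set partition $\pi$ of $\{1,\dots,d\}$ they induce (positions $s,t$ in the same block iff $i_s=i_t$). The crucial observation — and precisely where \emph{$d$-wise} independence, rather than full independence, is what is needed — is that every degree-$d$ monomial involves at most $d$ distinct variables, so its expectation factors over the blocks of $\pi$; moreover any monomial in which some block is a singleton contributes $0$ because $\E[Z_i]=0$. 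Hence only partitions all of whose blocks have size $\geq2$ survive, and each such partition has $j:=|\pi|\leq d/2$ blocks. For a fixed surviving $\pi$, summing over the distinct index values assigned to its blocks and using $|\E[Z_v^a]|\leq\E[Z_v^2]$ for every $a\geq2$ (since $|Z_v|\leq1$), the total contribution of $\pi$ is at most $\bigl(\sum_v\E[Z_v^2]\bigr)^j\leq\mu^j\leq\mu^{d/2}$.

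The remaining ingredient is a purely combinatorial count: the number of set partitions of $\{1,\dots,d\}$ into blocks each of size at least $2$ is at most $(Cd)^{d/2}$ for an absolute constant $C$ — the extreme case of all blocks of size exactly $2$ already accounts for the double factorial $(d-1)!!=\Theta(d^{d/2})$, and a Stirling-number estimate such as $S(d,j)\leq j^d/j!$ together with $j\leq d/2$ controls the contribution of partitions with fewer blocks. Multiplying this count by the per-pattern bound $\mu^{d/2}$ gives $\E[(X-\mu)^d]\leq(Cd)^{d/2}\mu^{d/2}$, which is exactly what was needed above.

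I expect the main obstacle to be this combinatorial step: pushing the number of surviving patterns into the clean $(Cd)^{d/2}$ shape with a universal constant (rather than settling for a loose $d^{O(d)}$ bound), and stating carefully why bounded (degree-$d$) independence is enough to justify the factorization of each monomial's expectation. The rest — the multinomial expansion, the vanishing of terms with singleton blocks, and the elementary moment inequality $|\E[Z_v^a]|\leq\E[Z_v^2]$ — is routine bookkeeping. I would also cite~\cite{bellare1994randomness} as the original source of this tail bound.
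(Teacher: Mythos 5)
The paper itself does not prove this lemma; it is imported verbatim as a citation to Bellare and Rompel, so there is no ``paper's proof'' to compare against. Your sketch nonetheless follows what is indeed the standard route for such bounded-independence tail bounds: Markov on the nonnegative variable $(X-\mu)^d$, multinomial expansion of $(\sum_i Z_i)^d$, factorization of each monomial's expectation through $d$-wise independence (since each monomial touches at most $d$ distinct indices), vanishing of any term whose partition has a singleton block, and the per-partition estimate $\sum_{\text{distinct }v_1,\dots,v_j}\prod_\ell|\E[Z_{v_\ell}^{a_\ell}]|\leq(\sum_v\E[Z_v^2])^j\leq\mu^j$. All of that is correct, and so is your observation that one may assume $\mu\geq1$ because otherwise the claimed bound is trivially $\geq1$ for large $m$.

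The one concrete flaw is the combinatorial claim you flag yourself as the chief obstacle: it is \emph{not} true that the number $P(d)$ of set partitions of $\{1,\dots,d\}$ with all blocks of size $\geq2$ is $\OO{(Cd)^{d/2}}$ for an absolute constant $C$. Singleton-free partitions are a positive fraction of all set partitions, so $P(d)=\Theta(B_d)$ with $B_d$ the Bell number, and $B_d$ grows like $(d/\ln d)^d$, far exceeding $(Cd)^{d/2}$; already the partitions whose blocks all have size $3$ contribute roughly $d^{2d/3}$. The perfect-matching count $(d-1)!!\approx(d/e)^{d/2}$ is only a lower bound, and it is very far from being the dominant term once blocks of size $\geq3$ are permitted. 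Your fallback estimate $S(d,j)\leq j^d/j!$ does not rescue $(Cd)^{d/2}$ either, since $\sum_j j^d/j!=e\,B_d$ by Dobinski's formula.

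Fortunately this gap is harmless for the lemma as stated, because the lemma treats $d$ as a constant. Then $P(d)$ is simply a constant $C_d$, and your chain of inequalities gives $\E[(X-\mu)^d]\leq\sum_{j\leq d/2}N_j\mu^j\leq P(d)\cdot\max(\mu,\mu^{d/2})\leq C_d\,\mu^{d/2}$ for $\mu\geq1$; Markov then yields $\Pr[|X-\mu|\geq\epsilon\mu]\leq C_d\epsilon^{-d}\mu^{-d/2}\leq C_d\epsilon^{-d}(m/\mu^2)^{d/2}=\OO{(md/\mu^2)^{d/2}}$, with the $d$- and $\epsilon$-dependent factor absorbed into the big-O exactly as the lemma permits. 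If you ever need a version uniform in $d$, you will have to be more careful: the extra $d^2$ appearing in Bellare--Rompel's form $(d\mu+d^2)^{d/2}$ is precisely what accounts for the regime $\mu\lesssim d$ where the partition count, not $\mu^{d/2}$, dominates; a bare $(Cd\mu)^{d/2}$ bound on the $d$-th central moment is not achievable with a universal $C$.
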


Following a similar reasoning as in \S\,\ref{sec4}, we obtain Theorem~\ref{thm5}, i.e., the running time remains unchanged, albeit with a decreased probability of success that converges to $1$ polynomially rather than exponentially. In this extended abstract we omit the detailed calculations. 

\subparagraph*{Avoiding the use of constant-time multiplication.} The standard technique mentioned above to generate $d$-wise independent random integers (for some constant $d$) relies on evaluating a polynomial of degree $d$ at inputs $x=0,1,\ldots, N$, for some $N \in O(n)$. Doing this na\"{i}vely makes use of $O(n)$ multiplications and additions. If we assume, as is often done, that the cost of multiplying word-sized integers is constant, then this is within the stated time bounds. We note however that it is possible to get around this assumption, using a technique by Knuth~\cite[\S\,4.6.4]{Knuth2} for evaluating a polynomial at the points along an arithmetic progression. By doing so, we trade the $O(n)$ multiplications and additions for $O(1)$ multiplications and $O(n)$ additions. Thus, even if multiplication is not a constant-time primitive and is instead implemented as, say, binary long multiplication, the stated time bounds are still achievable.

\subparagraph*{Equal elements.} The assumption that all elements of the matrix $A$ are distinct is done without loss of generality. If there are equal elements, we can instead consider the matrix $B$, whose elements $B_{ij} = (A_{ij},i,j)$ are to be compared lexicographically. This can be done implicitly, without the need of storing $B$. Notice that if $A_{ij}$ is a strict saddlepoint of $A$, then $B_{ij}$ is the (necessarily unique) strict saddlepoint of $B$. Thus, we can solve the problem on $A$ by finding a strict saddlepoint of $B$ (if there is one) and testing if it is indeed a strict saddlepoint of $A$.

\subparagraph*{Rectangular matrices.} We briefly discuss how the algorithm can be adapted to non-square matrices. Suppose $m > n$ and let $A$ be an $m \times n$ matrix (the case of an $n \times m$ matrix can be handled similarly). We divide $A$ into $\lceil m/n \rceil$ possibly overlapping $n \times n$ submatrices that fully cover $A$ and compute the strict saddlepoints of each submatrix (whenever it exists), in $O(m)$ total time; let $Q$ be the set of these \emph{local saddlepoints}. Either $Q$ is empty and then $A$ has no strict saddlepoint, or each row of $A$ must contain an entry larger or equal to each element of $Q$. In the latter case only the maximum of $Q$ can be a strict saddlepoint of $A$, and this can be verified in $O(m)$ time. 

\subparagraph*{Parallelization.} 
Finally, we remark on the changes necessary for an efficient parallel implementation as mentioned in \S\,\ref{sec1}. 
First, we set the size parameter of $\textsc{ReduceMatrix}$ to $s = (\log_2{n})^c$, for a sufficiently large constant $c$.
After reduction, we are then left with an $O(\polylog n)$-size matrix, which we can solve in $O(\polylog n)$ time and work (e.g., by a deterministic algorithm).
Note that by running more reduction steps than before, we have increased the probability of failure, which is now $e^{-\Omega((\log{n})^{c/20})} \leq 1/n^{\Omega(\polylog{n})}$, by Theorem~\ref{thm3}, so the algorithm still succeeds with high probability. 

The $O(\log{n})$ iterations of the main loop in $\textsc{ReduceMatrix}$ and the $O(\log{n})$ iterations of the Phase 1 loop in the $\textsc{FindPivot}$ calls can be invoked in sequence. Sampling independently from each row (in both Phase 1 and Phase 2), selection from each row (in Phase 2), and comparisons of an element with other elements of its row or column can be invoked fully in parallel, without increasing the total work. 

Two crucial components remain: (1) Selection from $n$ items can be implemented in parallel in $O(\log{n})$ time and $O(n)$ work~\cite{han2007optimal}, and (2) Array manipulation (compacting an $O(n)$-size array after deleting a constant fraction of row- or column-indices in $O(\log{n})$ time and $O(n)$ work) can be achieved by standard techniques based on prefix sums~\cite{blelloch1990prefix}. 

Overall, to find the strict saddlepoint of an $n\times n$ matrix with high probability, we need $O(\polylog n)$ parallel time, and the $O(n)$ bound on the total work from the sequential analysis continues to hold. 

\section{Lower bound against randomized algorithms}\label{sec5}

\begin{theorem}
Every randomized algorithm that decides if a given matrix $M$ has a (non-strict) saddlepoint and returns its value correctly with probability at least $\sfrac{5}{6}$ must query in expectation $\Omega(n^2)$ entries, for some $n \times n$ matrix $M$.

\end{theorem}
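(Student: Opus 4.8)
The plan is to invoke Yao's minimax principle: it suffices to design a distribution $\mathcal{D}$ over $n\times n$ matrices on which every \emph{deterministic} probe algorithm that is correct on a sufficiently large fraction of $\mathcal{D}$ must read $\Omega(n^2)$ entries in expectation. To get there from a randomized algorithm that errs with probability at most $\sfrac{1}{6}$ on \emph{every} input, I would first truncate it: if $T=\sup_M \E[\#\text{probes}]$ denotes its claimed expected running time, run it but abort with ``fail'' after $6T$ probes. By Markov this aborts with probability at most $\sfrac{1}{6}$ on each input, so the truncated algorithm is still correct with probability at least $\sfrac{2}{3}$ on every input (hence on $\mathcal{D}$), while always probing at most $6T$ entries. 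Averaging over its coins yields a deterministic algorithm $A_0$ correct on at least $\sfrac{2}{3}$ of $\mathcal{D}$ with $\E_{\mathcal{D}}[\#\text{probes}]\le 6T$; the distributional lower bound applied to $A_0$ then forces $T=\Omega(n^2)$. Since our hard instances will be $0/1$-valued, a comparison-based algorithm making $q$ comparisons knows no more than the $0/1$-values of the at most $2q$ entries it has touched, so it is enough to prove the bound in the simpler probe model.

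\textbf{The hard distribution.} All instances are $0/1$-valued. Call a column \emph{regular} if it has exactly one $0$, placed in a uniformly random row, and $1$s elsewhere; a regular column is not all-$1$s, so it cannot carry a value-$1$ saddlepoint. Let $\mathcal{D}_{\mathrm{no}}$ make every column independently regular: with probability $1-n^{1-n}$ no row is all-$0$s, so apart from this negligible mass the matrix has no saddlepoint and the correct answer is ``none''. Let $\mathcal{D}_{\mathrm{yes}}$ pick $j^\star\in\{1,\dots,n\}$ uniformly, make column $j^\star$ all-$1$s, and make every other column regular; then $j^\star$ is the unique all-$1$s column and no row is all-$0$s, so the matrix has a saddlepoint of value exactly $1$ (every entry of an all-$1$s column is a saddlepoint, and there is no all-$0$s row to create a value-$0$ one). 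Put $\mathcal{D}=\tfrac12\mathcal{D}_{\mathrm{no}}+\tfrac12\mathcal{D}_{\mathrm{yes}}$. The point is that the correct answers for the two parts (``none'' versus ``$1$'') differ, so an algorithm that fails to distinguish a matrix drawn from $\mathcal{D}_{\mathrm{no}}$ from its $\mathcal{D}_{\mathrm{yes}}$-counterpart must be wrong on one of them.

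\textbf{The distinguishing lemma.} The heart of the argument is to show that a deterministic probe algorithm $A$ making at most $\epsilon n^2$ probes (for a small absolute constant $\epsilon$) errs on $\mathcal{D}$ with probability at least $\tfrac12-O(\epsilon)$. I would couple $\mathcal{D}_{\mathrm{no}}$ and $\mathcal{D}_{\mathrm{yes}}$ as follows: draw $M\sim\mathcal{D}_{\mathrm{no}}$ (conditioned on no all-$0$s row), pick $j^\star$ uniformly and independently, and let $M'$ be $M$ with column $j^\star$ overwritten by all $1$s; then $M'\sim\mathcal{D}_{\mathrm{yes}}$ (up to negligible total variation). The runs of $A$ on $M$ and $M'$ agree until $A$ probes an entry where the two matrices differ, i.e., until it probes the hidden $0$ of column $j^\star$ in $M$; hence, writing $N(M)$ for the number of columns whose hidden-$0$ entry $A$ probes when run on $M$, the divergence probability over the coupling is $\E_M[N(M)]/n$. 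The key estimate is $\E_M[N(M)]=O(\epsilon n)$. To see it, split the columns of $M$ into those probed fewer than $n/2$ times and those probed at least $n/2$ times. For the latter there are at most $2\epsilon n^2/n=O(\epsilon n)$ of them since the total probe budget is $\epsilon n^2$. For the former, a deferred-decisions argument shows that, conditioned on the probes already made into a column (all of which returned $1$), its hidden $0$ remains uniform over the unprobed rows, so $A$ hits it with probability $O(\E[k_j]/n)$ where $k_j$ is the number of probes into column $j$; summing and using $\sum_j \E[k_j]\le \epsilon n^2$ gives $O(\epsilon n)$ again. Thus $A$ returns the same answer on $M$ and $M'$ with probability $1-O(\epsilon)$, and since exactly one of ``none'', ``$1$'' is correct there, $A$ errs with probability at least $\tfrac12\bigl(1-O(\epsilon)\bigr)$ over $\mathcal{D}$; taking $\epsilon$ small makes this exceed $\tfrac13$, contradicting the correctness of $A_0$ and yielding $T=\Omega(n^2)$.

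\textbf{Expected main obstacle.} I expect the delicate point to be the estimate $\E_M[N(M)]=O(\epsilon n)$ and, within it, the adaptivity: the sequence of rows $A$ probes in a given column depends on where that column's $0$ happens to be, so one must argue carefully — via deferred decisions on the per-column $0$-position — that as long as all responses in the column have been $1$, the hidden $0$ is uniform over the not-yet-probed rows, and that the (few) heavily probed columns contribute a lower-order term. The remaining work is routine but needs care: tracking the constants through the Yao/truncation reduction so the thresholds are compatible with the $\sfrac{5}{6}$ in the statement, checking that $\mathcal{D}_{\mathrm{yes}}$-instances genuinely have saddlepoint value exactly $1$ and $\mathcal{D}_{\mathrm{no}}$-instances (after conditioning) none, and confirming the reduction from comparison-based to probe algorithms on $0/1$ matrices.
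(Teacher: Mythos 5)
Your proposal is correct and rests on the same high-level strategy as the paper: Yao's minimax principle applied to a distribution where the answer hinges on locating one hidden entry, a heavy/light dichotomy on the probe counts per line, and a deferred-decisions argument showing that $o(n^2)$ probes find only $o(n)$ of the per-line hidden elements. The concrete realizations differ in instructive ways. The paper plants one value-$2$ entry per row, perturbs a uniformly random one of them to $\pm1$, and argues that (i) without querying the perturbed entry $t$ the algorithm succeeds with probability at most $\sfrac12$, and (ii) within a budget of $n^2/1000$ queries at most $n/50$ special elements are revealed (counting heavy rows directly and applying Markov's inequality to the light rows), so $t$ is found with probability at most $\sfrac1{50}$. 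You instead use $0/1$ matrices and a yes/no pair coupled to differ in a single entry (the hidden $0$ of the overwritten column), which converts the argument into an indistinguishability bound with divergence probability $\E[N(M)]/n$, estimated directly in expectation. Your coupling is arguably cleaner — it avoids both the three-way case analysis at the end of the paper's proof and the separate Markov step — at the price of the (correctly flagged, genuinely routine) bookkeeping around conditioning on the absence of an all-zero row and the resulting total-variation slack. Your truncation-based formulation of Yao's principle is also more explicit than the paper's one-line invocation of $T'(n)\geq\frac12 T(n)$; both yield the same $\Omega(n^2)$ bound with constants compatible with the $\sfrac56$ threshold in the statement.
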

\begin{proof}%

    Consider a random $n\times n$ matrix $M$ generated by the following process:
    \begin{itemize}
        \item Start with all entries set to $0$.
        \item In every row, choose an entry uniformly at random and set it to $2$.
        \item Choose, uniformly at random, an entry with value $2$. Change it to $1$ or $-1$ with probability $\frac{1}{2}$ each.
    \end{itemize}
    Call $t$ the unique entry with value $1$ or $-1$. Notice that if $t=1$, then either there is no saddlepoint or the value of the saddlepoint is $1$ (the latter happens exactly if all $2$s are in the column of $t$). If $t=-1$, then the value of the saddlepoint is $0$ (pick any $0$ in $t$'s row). %
    Consider some arbitrary fixed (deterministic) algorithm that finds the saddlepoint. Observe that, unless the algorithm queries $t$, the probability of it succeeding is at most $\frac{1}{2}$. 
    
        Let us give the algorithm a budget of $n^2/1000$ queries, and argue that the probability that it succeeds within this budget is less than $\frac{2}{3}$. %
    Call the unique nonzero entry of each row a \emph{special element}, 
call rows with at least $n/10$ entries queried \emph{heavy}, and other rows \emph{light}; a row can change status at most once, from light to heavy.
Notice that the algorithm can reveal at most $\frac{n^2}{1000} / \frac{n}{10} = \frac{n}{100}$ special elements in heavy rows. 

All queries in light rows reveal a special element with probability at most $1/(9n/10) = \frac{10}{9n}$. (This is because the special element, unless already revealed, is equally likely to be in any of the unqueried places; otherwise, if the special element of the row has already been revealed, then the probability is zero.) The expected number of special elements revealed in light rows is thus at most $\frac{n^2}{1000} \cdot \frac{10}{9n} = \frac{n}{900}$. By Markov's inequality, the probability of revealing $k$ times as many special elements is at most $\frac{1}{k}$, e.g., the probability of revealing more than $\frac{n}{100}$ is at most $\frac{1}{9}$.
    
Thus, with probability at least $1-\frac{1}{9}$, the number of special elements revealed (in all rows) is at most $\frac{n}{100} + \frac{n}{100} = \frac{n}{50}$. Assume that, indeed, at most $\frac{n}{50}$ special elements are revealed. 
Then, the probability of $t$ being among these is at most $\frac{1}{50}$.

Overall, the probability that the algorithm succeeds is at most $\frac{1}{50} + \frac{1}{9} + \frac{1}{2} < \frac{2}{3}$. The first term is for the case when the algorithm reveals at most $n/50$ special elements and finds $t$, the second term is for the case when the algorithm reveals more than $n/50$ special elements, and the third term is for the case when the algorithm succeeds without finding $t$. 

  Let $T(n)$ denote the minimum expected number of queries of a deterministic algorithm for the saddlepoint problem which errs with probability at most $\frac{1}{3}$ on this distribution of inputs. The previous discussion shows that $T(n)\geq \Omega(n^2)$.

  Let $T'(n)$ denote the expected number of queries of any randomized algorithm for the saddlepoint problem which errs with probability at most $\frac{1}{6}$ on the worst-case input. By Yao's minimax principle, %
  e.g.,~\cite[Proposition 2.6]{motwani1995}, we have $T'(n) \geq \frac{1}{2}T(n)$. Thus, the claimed result holds.
  \end{proof}

\end{document}